\def\BibTeX{{\rm B\kern-.05em{\sc i\kern-.025em b}\kern-.08em
    T\kern-.1667em\lower.7ex\hbox{E}\kern-.125emX}}
\colorlet{lightgray}{gray!20}
\begin{document}

\title{\bf Repairing mappings under policy views}

\author{Angela Bonifati}
\affiliation{%
  \institution{Lyon 1 University \& Liris CNRS}
  \city{Lyon, France}
}
\email{angela.bonifati@univ-lyon1.fr}

\author{Ugo Comignani}
\affiliation{%
  \institution{Lyon 1 University \& Liris CNRS}
  \city{Lyon, France}
}
\email{ugo.comignani@univ-lyon1.fr}

\author{Efthymia Tsamoura}
\affiliation{%
  \institution{Alan Turing Institute \&\\ University of Oxford}
  \city{Oxford, UK}
}
\email{efthymia.tsamoura@cs.ox.ac.uk}

\renewcommand{\shortauthors}{A. Bonifati, U. Comignani, and E. Tsamoura}

\begin{abstract}    
The problem of data exchange involves a source schema, a target schema
and a set of mappings from transforming the data between the two schemas. 
We study the problem of data exchange 
in the presence of privacy restrictions on the source.
The privacy restrictions are expressed as a set of 
\emph{policy views} representing the information 
that is safe to expose over \emph{all} instances of the source. 
We propose a protocol that provides formal privacy guarantees
and is \emph{data-independent}, i.e., if certain criteria are met, 
then the protocol guarantees that the mappings leak no sensitive information
independently of the data that lies in the source. 
We also propose an algorithm for \emph{repairing} 
an input mapping w.r.t. a set of policy views, 
in cases where the input mapping leaks sensitive information.
The empirical evaluation of our work shows that the proposed algorithm is quite efficient,
repairing sets of 300 s-t tgds in an average time of 5s on a commodity machine.
To the best of our knowledge, our work is the first one  
that studies the problems of exchanging data and repairing mappings 
under such privacy restrictions. 
Furthermore, our work is the first to provide practical algorithms
for a logical privacy-preservation paradigm, described 
as an open research challenge in previous work on this area.
\end{abstract}

\begin{CCSXML}
<ccs2012>
<concept>
<concept_id>10002951.10002952.10003219.10003217</concept_id>
<concept_desc>Information systems~Data exchange</concept_desc>
<concept_significance>500</concept_significance>
</concept>
</ccs2012>
\end{CCSXML}

\ccsdesc[500]{Information systems~Data exchange}

\keywords{privacy-preserving data integration, data exchange, mapping repairs}

\maketitle

\section{Introduction}

We consider the problem of exchanging data between a source schema 
$\source$ and a target schema $\target$ via 
a set of \emph{source-to-target} (s-t) dependencies $\Sigma_{st}$ 
that usually come in the form of \emph{tuple-generating dependencies} (tgds).
This triple of a source schema, a target schema 
and a set of dependencies is called a \emph{mapping}.
The s-t dependencies specify how and what source data should appear 
in the target and are expressed as sentences in first-order logic \cite{Fagin2005a}. 

Our work considers a privacy-aware variant of the data exchange problem, 
in which the source comes with a set of constraints, representing 
the data that is \emph{safe} to expose to the target over \emph{all instances} of the source.   
We also assume that all users, both the malicious and the non-malicious ones, 
might know the source and the target schema, the data in the target as well as the s-t tgds. 
Under these assumptions, our work will address the following issues: 
how could we represent privacy restrictions on the sources and what would it mean 
for a data exchange setting to be safe under the proposed privacy restrictions?; 
assuming that the privacy-preservation protocol is fixed, how could we assess the safety of 
a data exchange setting w.r.t. the privacy restrictions and provide \emph{strong guarantees} of no privacy leak?;
finally, in case of privacy violations, how could we \emph{repair} the s-t tgds?    

Regarding the first issue, we assume that the restrictions on the sources 
are expressed as a set of views, called \emph{policy views}.  
Inspired by prior work on privacy-preservation \cite{NashD07,MCCK17},
we define a set of s-t tgds to be safe w.r.t. the policy views
if \emph{every positive information} that is kept secret
by the policy views is also kept secret by the s-t tgds.
As we will see in subsequent sections, the proposed privacy-preservation 
protocol is \emph{data-independent} allowing us to provide strong privacy-preservation guarantees 
over all instances of the sources. The above addresses the second
aforementioned issue, as well. 
Regarding the third issue, our work proposes a 
repairing algorithm for the proposed privacy-preservation protocol.   
The feature of the proposed repairing algorithm is that it can employ techniques for 
learning the user preferences during the repairing process. 
The empirical evaluation of our work over an existing benchmark shows that the 
proposed algorithm is quite efficient. Indeed it can 
repair a set of 300 s-t tgds in less than 5s on a commodity machine. 

Our secure data exchange setting is exemplified in Figure
\ref{fig:scenario} and 
in the following running example inspired by a real world scenario from an hospital in the UK\footnote{\url{https://www.nhs.uk/}}.
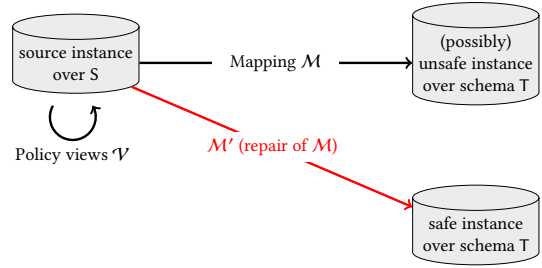
\begin{figure}
  \centering
  \adjustbox{width=\columnwidth, height=3.5cm ,keepaspectratio}{
    \begin{tikzpicture}[
      database/.style={cylinder,
      cylinder uses custom fill,
      cylinder body fill=gray!15,
      cylinder end fill=gray!15,
      shape border rotate=90,
      aspect=0.2,
      draw
      },
      view/.style={rectangle,draw,fill=gray!15}
    ]

    \node[database] (sourceDb) at (0,3) {\begin{minipage}{2cm}\centering source instance\\ over $\source$ \end{minipage}};
    \node[database] (unsafeDb) at (7,3) {\begin{minipage}{2cm}\centering (possibly) unsafe instance over schema $\target$ \end{minipage}};
    \node[database] (safeDb) at (7,0) {\begin{minipage}{2cm}\centering safe instance over schema $\target$ \end{minipage}};

    \draw[thick,->,,line width=1.2pt] (-0.4,2.25) arc (150:150+240:4mm)  node[pos=0.5,below]{\begin{minipage}{2cm}\centering Policy views $\policyViews$ \end{minipage}} (sourceDb);
    
    \draw[->,line width=1.2pt](sourceDb) -- (unsafeDb) node [midway,fill=white] {\begin{minipage}{2cm}\centering Mapping $\mappingsOriginal$\end{minipage}};

    \draw[red,->,line width=1.2pt](sourceDb) -- (safeDb) node [midway,fill=white] {
    \begin{minipage}{2.3cm}\centering $\mappingsOriginal^\prime$ (repair of $\mappingsOriginal$)\end{minipage}};
	
    \end{tikzpicture}
   }
 \caption{A data exchange setting with mappings and policy views.}
 \label{fig:scenario}
\end{figure}

\begin{example}\label{example:running:intro} 
    Consider the source schema $\source$ consisting of the following relations: $\pat$, $\hospn$, $\hosps$, $\onc$ and $\school$.
    Relation $\pat$ stores for each person registered with the NHS, his insurance number, his name, his ethnicity group and his county.
    Relations $\hospn$ and $\hosps$ store for each patient who has been admitted to 
    some hospital in the north or the south of UK, his insurance number and the reason 
    for being admitted to the hospital. Relation $\onc$ stores information related 
    to patients in oncology departments and, in particular, their 
    insurance numbers, their treatment and their progress. Finally, relation $\school$ stores for each student 
    in UK, his insurance number, his name, his ethnicity group and his county.      
    
    Consider also the set $\policyViews$ comprising the policy views $\viewPred_1$--$\viewPred_4$.
    The policy views define the information that is safe to make available to public. 
    View $\viewPred_1$ projects the ethnicity groups and the hospital admittance reasons 
    for patients in the north of UK;
    $\viewPred_2$ projects the counties and the hospital admittance reasons 
    for patients in the north of UK;
    $\viewPred_3$ projects the treatments and the progress of patients of oncology departments;
    $\viewPred_4$ projects the ethnicity groups of the school students.  
    The policy views are safe w.r.t. the NSS privacy preservation protocol. Indeed,
    the NSS privacy preservation protocol considers as unsafe any non-evident piece of information that 
    can potentially de-anonymize an individual.
    For example, views $\viewPred_1$ and $\viewPred_2$
    do not leak any sensitive information, since the results concern patients 
    from a very large geographical area and, thus, the probability of de-anonymizing a patient is 
    significantly small. For similar reasons, views $\viewPred_3$ and $\viewPred_4$ are considered to be safe: 
    the probability of de-anonymizing patients of the oncology department 
    from $\viewPred_3$ is zero, since there is no way to link
    a patient to his treatment or his progress, while $\viewPred_4$ projects information which is already evident to public.
    \begin{align}
        \pat(\pid,\name,\ethn,\post) \wedge \hospn(\pid,\dis)  &\leftrightarrow \viewPred_1(\ethn,\dis)      \label{example:pat-hosp1}\\
        \pat(\pid,\name,\ethn,\post) \wedge \hosps(\pid,\dis)  &\leftrightarrow \viewPred_2(\post,\dis)      \label{example:pat-hosp2}\\
        \onc(\pid,\tr,\pr)                                     &\leftrightarrow \viewPred_3(\tr,\pr)         \label{example:onc}\\
        \school(\sid,\name,\ethn,\post)                        &\leftrightarrow \viewPred_4(\ethn)           \label{example:sch}
    \end{align}    
    
    Finally, consider the following set of s-t dependencies $\Sigma_{st}$.
    The dependencies $\mu_e$ and $\mu_c$ project similar information
    with the views $\viewPred_1$ and $\viewPred_2$, respectively.
    They focus, however, on patients in the north of UK.
    Finally, the dependency $\mu_s$ projects the ethnicity groups of 
    students who have been in some oncology department.   
    \begin{align}
        \pat(\pid,\name,\ethn,\post) \wedge \hospn(\pid,\dis)     &\rightarrow \ethreason(\ethn,\dis)     \tag{$\mu_e$} \nonumber \\ 
        \pat(\pid,\name,\ethn,\post) \wedge \hospn(\pid,\dis)     &\rightarrow \countyreason(\post,\dis)  \tag{$\mu_c$} \nonumber \\ 
        \school(\sid,\name,\ethn,\post) \wedge \onc(\sid,\tr,\pr) &\rightarrow \studentOnc(\ethn)         \tag{$\mu_s$} \nonumber
    \end{align}
    The questions addressed in our paper are the following ones:
    Are the s-t dependencies safe w.r.t the policy views?
    Are there any formal guarantees for privacy preservation in the context of policy views?
    If the s-t dependencies are not safe w.r.t. the policy views,
    how could we repair them and provide formal privacy preservation guarantees?
\end{example}
 
Our technique is inherently data-independent thus bringing the advantage that
both the safety test and the repairing operations are executed on the
metadata provided through the mappings and not on the underlying data
instances. The logical foundations of information
disclosure in ontology-based data integration have been laid in
\cite{MCCK17} in the presence of boolean policies. 
Instead, we focus on non-boolean policies. 
Taking a step forward, we also propose an algorithm
for repairing a set of unsafe s-t tgds w.r.t. 
our privacy preservation protocol. 
To the best of our knowledge, our work is the first to provide practical algorithms
for a logical privacy-preservation paradigm, described as an open research challenge in \cite{NashD07,MCCK17}.
We leave out probabilistic approaches and anonymization techniques
\cite{MiklauS07,Sweene02}, which involve modifications 
of the underlying data instances and are orthogonal
to our approach. A careful treatment of related work 
is deferred to Section \ref{sec:related}. 

The source code and the experimental scenarios are publicly available
at \url{https://github.com/ucomignani/MapRepair.git}.

The paper is organized as follows. Section~\ref{sec:related} discusses the
related work. Section~\ref{sec:prelims} presents the
basic concepts and notions. Section~\ref{sec:privacy} lays our
privacy preservation protocol. Section~\ref{sec:repair}
presents our repairing algorithms and their properties. 
mechanism.
Section~\ref{sec:experiments} outlines the experimental results,
while Section~\ref{sec:conclusion} concludes our paper. 

\section{Related Work} \label{sec:related}

{\bf Privacy in data integration} Safety of secret queries formulated against a global schema and adhering to 
the certain answers semantics has been tackled in previous theoretical work \cite{NashD07}. 
They define the optimal attack that characterizes a set of queries that an attacker can issue to which no further queries can be added to infer more information.
They then define the privacy guarantees against the optimal attack by considering the static and the dynamic case, the latter 
corresponding to modifications of the schemas or the GLAV mappings. 
The same definition of secret queries and privacy setting is adopted in \cite{MCCK17}, which instead focuses on boolean conjunctive 
queries as policy views and on the notion of safety with respect to a given mapping. An ontology-based integration scenario is assumed in which the target instance is produced via a set of mappings starting from an underlying data source. Whereas they study the complexity of the view compliance problem in both data-dependent and data-independent setting, we focus on the latter and extend it to non-boolean conjunctive queries as policy views. 
We further consider multiple policy views altogether in the design practical algorithm for checking the safety of schema
mappings and for repairing the mappings in order to resume safety in case of violations. 

{\bf Privacy in data publishing}
Data publishing accounts for the settings in which a view exports or publishes the information of an underlying data source. 
Privacy and information disclosure in data publishing linger over the problem of avoiding the disclosure of the content of the view under a confidential query. 
A probabilistic formal analysis of the query-view security model has been presented in \cite{MiklauS07}, 
where they offer a complete treatment of the multi-party collusion and the use of external adversarial knowledge. 
Access control policies using cryptography are used in \cite{MiklauS07} to enforce the 
authorization to an XML document. Our work differs from theirs on both the considered setting, as well as the adopted techniques 
and the adopted privacy protocol.

{\bf Controlled Query Evaluation}
Controlled Query Evaluation is a confidentiality
enforcement framework introduced in \cite{sicherman1983answering} and refined in \cite{bonatti1995foundations},\cite{biskup_controlled_2004} and \cite{biskup2008keeping}, in which a policy declaratively specifies sensitive information
and confidentiality is enforced by a censor. Provided a query as input, a censor verifies whether the query leads to a violation of the policy and in case of a violation it returns a distorted
answer. It has been recently adopted in ontologies expressed with Datalog-like rules and in lightweight Description Logics \cite{grau_controlled_2015}. 
They assume that the policies are only known to database administrators and not to ordinary users and that the data has protected access through a query interface. Our assumptions and setting are quite different, since our multiple policy views are accessible to every user and our goal is to render the s-t mappings safe with respect to a set of policies via repairing and rewriting.  

{\bf Data privacy}
Previous work has addressed access control to protect database instances at
different levels of granularity \cite{SarfrazNCB15}, in
order to combine encrypted query processing and authorization rules. Our work does
not deal with these authorization methods, as well as does
not consider any concrete privacy or anonymization algorithms operating on
data instances, such as differential privacy \cite{DworkR14} 
and k-anonymity \cite{Sweene02}. 
\section{Preliminaries} \label{sec:prelims}

Let $\const$, $\nulls$, and $\vars$ be mutually disjoint, infinite sets of
\emph{constant values}, \emph{labeled nulls}, and \emph{variables},
respectively. 
A \emph{schema} is a set of \emph{relation names} (or just \emph{relations}),
each associated with a nonnegative integer called \emph{arity}. 
A \emph{relational atom} has the form ${R(\vec t)}$ where $R$ is an
$n$-ary relation and ${\vec t}$ is an $n$-tuple of \emph{terms}, 
where a term is either a constant, a labelled null, or a variable. 
An \emph{equality atom} has the form ${t_1 = t_2}$ where $t_1$ and $t_2$ are terms.
An atom is called ground or fact, when it does not contain any variables.
A position in an $n$-ary atom $A$ is an integer ${1 \leq i \leq n}$.
We denote by $\pos{A}{i}$, the $i$-th term of $A$.  
An instance $I$ is a set of relational facts.
An atom (resp.\ an instance) is \emph{null-free} if it does not contain labelled nulls.
The \emph{critical instance} of a schema $\source$, denoted as $\critical{\source}$,
is the instance containing a fact of the form ${R(\vec{\ast})}$, 
for each $n$-ary relation ${R \in \source}$, where $\ast$ is called the \emph{critical constant}
and $\vec{\ast}$ is an $n$-ary vector.  
A \emph{substitution} $\sigma$ is a mapping from variables into constants or labelled nulls.

A \emph{dependency} describes the semantic relationship between relations. 
A \emph{Tuple Generating Dependency} (tgd) is a formula of the form 
${\forall \vec x ~ \lambda(\vec x) \rightarrow \exists \vec y ~ \rho(\vec x, \vec y)}$, where
$\lambda(\vec x)$ and $\rho(\vec x,\vec y)$ are conjunctions of relational, null-free
atoms. An \emph{Equality Generating Dependency} (egd) is a formula of the
form ${\forall \vec x ~ \lambda(\vec x) \rightarrow x_i = x_j}$, 
where $\lambda(\vec x)$ is a conjunction of relational, null-free atoms.
We usually omit the quantification for brevity. 
We refer to the left-hand side of a tgd or an egd $\delta$ as the \emph{body}, denoted as $\body(\delta)$, 
and to the right-hand side as the \emph{head}, denoted as $\head(\delta)$. 
An instance $I$ satisfies a dependency $\delta$, written ${I \models \delta}$ if 
each homomorphism from $\body(\delta)$ into $I$ can be extended to a homomorphism 
$h'$ from $\head(\delta)$ into $I$. An instance $I$ satisfies a set of dependencies 
$\Sigma$, written as ${I \models \Sigma}$, if ${I \models \delta}$ holds, for each ${\delta \in \Sigma}$.  
The \emph{solutions} of an instance $I$ w.r.t. $\Sigma$ is the
set of all instances $J$ such that ${J \supseteq I}$ and ${J \models \Sigma}$. 
A solution is called \emph{universal} if it can be homomorphically embedded to each solution of 
$I$ w.r.t. $\Sigma$.

A \emph{conjunctive query} (CQ) is a formula of the form ${\exists \vec y ~
\bigwedge_i A_i}$, where $A_i$ are relational, null-free atoms.
A CQ is boolean if it does not contain any free variables. 
A substitution $\sigma$ is an \emph{answer} to a CQ $\util$ on an instance $I$ if the domain of $\sigma$
is the free variables of $Q$, and if $\sigma$ can be extended to a
homomorphism from ${\bigwedge_i A_i}$ into $I$. 
We denote by $\util(I)$, the answers to $\util$ on $I$. 

Given an instance $I$ and a set of dependencies $\Sigma$,
the chase iteratively computes a \emph{universal} solution of $I$ w.r.t. $\Sigma$ 
\cite{Fagin2005a, chasebench}.
Starting from ${I_0 = I}$, at each iteration $i$, 
it computes a new instance $I_i$ by applying a tgd or an egd chase step: 

    \myparagraph{tgd chase step} Consider an instance $I_i$, 
    a tgd $\delta$ of the form ${\forall \vec x ~ \lambda(\vec x) \rightarrow \exists \vec y ~ \rho(\vec x, \vec y)}$
    and a homomorphism $h$ from $\lambda(\vec x)$ into $I_i$, such that there does not exist 
    an extension of $h$ to a homomorphism from $\rho(\vec x, \vec y)$ into $I_i$.
    Such homomorphisms are called \emph{active triggers}.  
    Applying the tgd chase step for $\delta$ and $h$ to $I_i$ results in a new instance  
    ${I_{i+1} = I_i \cup h'(\rho(\vec x, \vec y))}$, where $h'$ is a substitution such that ${h'(x_j) = h(x_j)}$ for
    each variable ${x_j \in \vec x}$, and $h'(y_j)$, for each ${y_j \in \vec y}$,
    is a fresh labeled null not occurring in $I_i$. 

    \myparagraph{egd chase step} Consider an instance $I_i$, 
    an egd $\delta$ of the form ${\forall \vec x ~ \lambda(\vec x) \rightarrow x_i = x_j}$
    and a homomorphism $h$ from $\lambda(\vec x)$ into $I_i$, such that ${h(x_i) \neq h(x_j)}$. 
    Applying the egd chase step for $\delta$ and $h$ to $I_i$ \emph{fails} if 
    ${\{ h(x_i),h(x_j)\} \subseteq \const}$, and otherwise it
    results in a new instance ${I_{i+1} = \nu(I_i)}$, 
    where ${\nu = \{ h(x_j) \mapsto h(x_i) \}}$ if ${h(x_i) \in
    \const}$, and ${\nu = \{ h(x_i) \mapsto h(x_j) \}}$ if ${h(x_i) \not\in \const}$.
We denote by $\chase{I}{\Sigma}$, the chase of $I$ w.r.t. $\Sigma$. 

Let $\source$ be a source schema and let $\target$ be a target schema. 
A \emph{mapping} $\mappingsOriginal$ from $\source$ to $\target$ 
is defined as a triple $(\source,\target,\Sigma)$, where, generally,  
$\Sigma = \Sigma_s \cup \Sigma_{st} \cup \Sigma_t$.   
$\Sigma_s$, $\Sigma_{st}$ and $\Sigma_t$ denote the source, 
s-t and target dependencies over $\source$ and $\target$, 
respectively. We usually refer to the dependencies in 
$\Sigma_{st}$ as \emph{mappings}. A variable $x$ of a mapping ${\mu \in \Sigma_{st}}$
is called exported if it occurs both in the body and the head of $\mu$.
We denote by $\frontier{\mu}$, the set of exported variables of $\mu$.  
The inverse of set of s-t dependencies $\Sigma_{st}$, 
denoted as $\inverse{\Sigma_{st}}$ is the set consisting, for each 
mapping $\mu$ in $\Sigma_{st}$ of the form ${\lambda(\vec x) \rightarrow \rho(\vec x, \vec y)}$, 
a mapping $\inverse{\mu}$ of the form ${\rho(\vec x, \vec y) \rightarrow \lambda(\vec x)}$. 
In this paper, we will focus on the scenario,
where $\Sigma_s$ and $\Sigma_t$ are empty, so $\Sigma$ will 
be equal to $\Sigma_{st}$. Furthermore, in this paper we focus on \emph{GLAV} mappings, i.e., 
s-t dependencies corresponding to a set of views. 

The \emph{certain answers} of a CQ $\util$ over $\target$ w.r.t 
$\instance$ and $\mappingsOriginal$, denoted as 
$\certain{\util}{\instance}{\mappingsOriginal}$, 
are the intersection of all answers to $\util$ over all 
solutions of $\instance$ w.r.t. $\Sigma$. 
Given a finite, null-free instance $I$ of the source schema, the objective of \emph{data
exchange} is to compute a universal solution 
of $I$ w.r.t. the dependencies $\Sigma$ from $\mappingsOriginal$.

\section{Privacy preservation} \label{sec:privacy}

In this section, we introduce our notion of privacy preservation.  
Let $\policyViews$ be a set of \emph{policy views} over $\source$.
The policy views represent the information that is safe to expose for instances $\instance$ of $\source$.
We denote by $\mappingsOriginal_{\viewSchema} = (\source,\viewSchema,\policyViews)$ 
the mapping from $\source$ to $\viewSchema$, where
$\viewSchema$ denotes the schema of the views occurring in $\policyViews$.    
Our goal is to verify whether a user-defined mapping $\mappingsOriginal = (\source,\target,\Sigma)$ 
is safe w.r.t. a view mapping $\mappingsOriginal_{\viewSchema}$.   
Below, we will introduce a notion for assessing the safety of a GAV mapping $\mappingsOriginal_2$ 
with respect to a GAV mapping $\mappingsOriginal_1$, when both make use of the same source schema $\source$. 
Below, let $\Sigma_i = \Sigma_{st_i}$ be the dependencies associated with $\mappingsOriginal_i$.  

\subsection{A formal privacy-preservation protocol}\label{sec:privacy:protocol}
Our notion of privacy preservation builds upon the protocol introduced in \cite{MCCK17}.
Below, we formalize the notion of privacy preservation from \cite{MCCK17} 
and we extend it for non-boolean conjunctive queries.    
First we recapitulate the notion of indistinguishability of two source instances.  

\begin{definition}\label{definition:indistinguishability}
    Two instances $\instance$ and $\instance'$ of a source schema $\source$ 
    are indistinguishable with respect to a mapping ${\mappingsOriginal = (\source,\target,\Sigma)}$, 
    denoted as $\indistinguishability{\instance}{\instance'}{\mappingsOriginal}$,
    if ${\certain{\util}{\instance}{\mappingsOriginal} = \certain{\util}{\instance'}{\mappingsOriginal}}$
    for each CQ $\util$ over $\target$.
\end{definition}

Informally, Definition~\ref{definition:indistinguishability} tells us that two source instances are 
indistinguishable from each other if the target instances have the same certain answers. 

\begin{definition}\label{definition:disclosure}
    A mapping ${\mappingsOriginal = (\source,\target,\Sigma)}$ 
    \emph{does not disclose a CQ $\secret$ over $\source$ on any instance of $\source$}, if 
    for each instance $\instance$ of $\source$ there exists an instance $\instance'$ such that 
    ${\indistinguishability{\instance}{\instance'}{\mappingsOriginal}}$ and ${\secret(\instance') = \emptyset}$. 
\end{definition}

The problem of checking whether 
a mapping $\mappingsOriginal$ over $\source$ does not disclose a boolean 
and constants-free CQ $\secret$ on any instance of $\source$ is decidable for GAV mappings 
consisting of CQ views \cite{MCCK17}. 
In particular, $\mappingsOriginal$ does not disclose $\secret$ 
on any instance of $\source$ if and only if there does not exist a homomorphism 
from $\secret$ into the \emph{unique} instance computed by the 
\emph{visible chase} $\vchase{\Sigma}{\source}$ of $\Sigma$ under the critical instance $\critical{\source}$ of $\source$.
The visible chase computes a \emph{universal source instance}-- that is an instance, 
such that the visible part of any instance of $\source$  (i.e., the subinstance that becomes available through the mappings) can be mapped into it.
The only constant occurring in the instance computed by $\vchase{\Sigma}{\source}$ is the critical constant $\ast$
and it represents any other constant that can occur in the source instance. 
 
For the purpose of repairing the mappings efficiently, we introduce our own variant of the visible chase, 
which organizes the facts derived during chasing into subinstances
called \emph{bags}. Algorithm~\ref{alg:bagchase} describes the steps of the proposed variant. 
Please note that Algorithm~\ref{alg:bagchase} derives the same set of facts with the algorithm from \cite{MCCK17}.
However, instead of keeping these facts in a single set, we keep them in separate bags. 
Before presenting Algorithm~\ref{alg:bagchase}, we will introduce some new notions.

\begin{definition}\label{definition:derived-egd}
	Consider an instance $I$. Consider also a s-t tgd $\delta$
	and a homomorphism $h$ from ${\body(\delta)}$ into $I$, such that 
	${h(x) \in \nulls}$, for some ${x \in \frontier{\delta}}$. 
	Then, we say that the egd 
	\begin{align}
		\body(\delta) \rightarrow \bigwedge \limits_{\forall x \in \frontier{\delta}:  h(x) \in \nulls} x = \ast														\label{eq:derived-egd}
	\end{align}	
	is \emph{derived} from $\delta$ in $I$.
	For an egd $\epsilon$ that is derived from a s-t tgd $\delta$ in $I$, $\origin{\epsilon}$ denotes $\delta$.
	For a set of s-t tgds $\Sigma$ and an instance $I$, $\Sigma_{\approx}$ is the set comprising for each ${\delta \in \Sigma}$, the egd that is derived from $\delta$ in $I$.
\end{definition}

\begin{definition}\label{definition:relevant-bag}
	 Consider an instance $I$, whose facts are organized into the bags ${\bag_1,\dots,\bag_m}$.
	 Consider also a derived egd $\delta$ of the form \eqref{eq:derived-egd} and an active trigger $h$ for $\delta$ in $I$. 
	 A bag $\bag_i$ is \emph{relevant} for $\delta$ and $h$ in $I$, where ${1 \leq i \leq m}$, if some fact ${F \in h(\body(\delta))}$ occurs in $\bag_i$ and 
	 if some ${h(x)}$ is a labeled null occurring in $\bag_i$, where ${x \in \frontier{\delta}}$.
	 
	 Let ${\bag_{j_1},\dots,\bag_{j_k} \subseteq \bag_1,\dots,\bag_m}$ be the set of bags 
	 that are relevant for $\delta$ and $h$ in $I$.
	 Let ${\nu = \{ h(x_j) \mapsto h(x_i) \}}$ if ${h(x_i) = \ast}$, 
	 and ${\nu = \{ h(x_i) \mapsto h(x_j) \}}$ if ${h(x_i) \not\in \const}$, where $x_i, x_j$ are variables from ${\frontier{\delta}}$.
	 Then, the \emph{derived} bag $\bag$ for $\delta$ and $h$ in $I$ consists of the facts in ${\bigcup_{l=1}^k \nu(\bag_{j_l})}$. The bags 
	 ${\bag_{j_1},\dots,\bag_{j_k}}$ are called the \emph{predecessors} of $\bag$.
	 We use ${\bag_{j_l} \prec \bag}$ to denote that $\bag_{j_l}$ is a predecessor of $\bag$, for ${1 \leq l \leq k}$. 
\end{definition}

We are now ready to proceed with the description of Algorithm~\ref{alg:bagchase}. 
Given a s-t mapping, Algorithm~\ref{alg:bagchase} computes a universal source instance whose facts are organized into bags.  
Algorithm~\ref{alg:bagchase} first computes the instance $I_0$ by 
chasing $\critical{\source}$ using the s-t tgds, line~\ref{bagchase:init}.
It then chases $I_0$ with the inverse s-t tgds $\inverse{\Sigma}$, line~\ref{bagchase:back}.
and proceeds by chasing $I_1$ with the set of all derived egds $\Sigma_{\approx}$, for each ${\delta \in \Sigma}$ in $I_1$,
line~\ref{bagchase:sigma_approx:end}. 
Algorithm~\ref{alg:bagchase} computes a fresh bag at each chase step. 
In particular, for each active trigger $h$ for $\delta$ in $I$, 
Algorithm~\ref{alg:bagchase} adds a fresh bag with facts $h'(\head(\delta))$, 
if ${\delta \in \Sigma \cup \inverse{\Sigma}}$, line~\ref{bagchase:TGDs:create};
otherwise, if ${\delta \in \Sigma_{\approx}}$, then it adds the derived bag for $\delta$ and $h$ in $I$, 
see Definition~\ref{definition:relevant-bag}, line~\ref{bagchase:EGDsChase:create}. 

Note that, $\Sigma_{\approx}$ aims at ``disambiguating'' as many labeled nulls occurring in $I_1$ 
as possible, by unifying them with the critical constant $\ast$. 
Since $\ast$ represents the information that is ``visible" to a third-party, 
chasing with $\Sigma_{\approx}$ computes the \emph{maximal information} 
from the source instance a third-party has access to.  
Note that Algorithm~\ref{alg:bagchase} always terminates \cite{MCCK17}.  
Let ${B = \vchase{\Sigma}{\source}}$.  
We will denote by $\visins{\Sigma}{\source}$, 
the instance ${\bigcup \nolimits_{\beta\in B} \beta}$.
  
\begin{algorithm}[htb]
\scriptsize
\caption{$\vchase{\Sigma}{\source}$}\label{alg:bagchase}
\begin{algorithmic}[1]  
    \State $B_0 \defeq \mathsf{bagChaseTGDs}(\Sigma,\critical{\source})$																				\label{bagchase:init}
    \State $B_1 \defeq \mathsf{bagChaseTGDs}(\inverse{\Sigma},\bigcup \nolimits_{\beta\in B_0} \beta \setminus \critical{\source})$                                   \label{bagchase:back}
    \State Let $\Sigma_\approx$ be the set of all derived egds $\Sigma_{\approx}$, for each ${\delta \in \Sigma}$ in $I_1$                                                       \label{bagchase:sigma_approx:init} 
    \State \textbf{return} $\mathsf{bagChaseEGDs}(\Sigma_\approx, B_0 \cup B_1)$																		\label{bagchase:sigma_approx:end}

    \Statex
    \Procedure{$\mathsf{bagChaseTGDs}$}{$\Sigma,I$}
        \State $B \defeq \emptyset$
        \For{\textbf{each} $\delta \in \Sigma$}                                                                           						\label{bagchase:TGDs:start}
          \For{\textbf{each} active trigger $h : \body(\delta) \rightarrow I$}                                              				\label{bagchase:TGDs:hom}
          	\State \textbf{create} a fresh bag $\beta$ with facts $h'(\head(\delta))$									\label{bagchase:TGDs:create}
            	\State \textbf{add} $\beta$ to $B$																					
          \EndFor
        \EndFor                                                                                                          									\label{bagchase:TGDs:end}
        \State \textbf{return} $B$
    \EndProcedure
    
    \Statex
    \Procedure{$\mathsf{bagChaseEGDs}$}{$\Sigma_{\approx},B$}
        \State $i \defeq 0$; $I_i \defeq \bigcup \nolimits_{\beta\in B} \beta$
        \Do
            \State ${i \defeq i + 1}$

            \For{\textbf{each} $(\delta \in \Sigma_{\approx}$ of the form \eqref{eq:derived-egd}}   						\label{bagchase:EGDsChase:init}                                                                        
                \For{\textbf{each} active trigger $h:\body(\delta) \rightarrow I_{i-1}$}									\label{bagchase:EGDsChase:homom} 
                \If{$h(x) \neq \ast$, for some ${x \in \frontier{\delta}}$}													\label{bagchase:EGDsChase:tgdsConditions} 
                	\State Let $\bag$ be the derived bag for $\delta$ and $h$ in $I_{i-1}$
                    \State \textbf{add} $\beta$ to $B$																	\label{bagchase:EGDsChase:create} 
                    \State $I_i \defeq I_i \cup \beta$
                \EndIf
                \EndFor
            \EndFor   
        \doWhile{$I_{i-1} \neq I_i$}
        \State \textbf{return} $B$
    \EndProcedure
\end{algorithmic}
\end{algorithm}

\begin{example}\label{example:running:visChase} 
    We demonstrate the visible chase algorithm over the policy views
    and the s-t dependencies from Example~\eqref{example:running:intro}.  
    
    We first present the computation of ${\visins{\policyViews}{\source} = \bigcup \nolimits_{\beta\in \vchase{\policyViews}{\source}} \beta}$.

    The critical instance $\critical{\source}$ of $\source$ consists of the facts shown in Eq.~\eqref{example:crit}.
    \begin{align}
        \pat(\ast,\ast,\ast,\ast) && \hospn(\ast,\ast) && \hosps(\ast,\ast)  \label{example:crit} \\
        \onc(\ast,\ast,\ast) && \school(\ast,\ast,\ast,\ast) &&              \nonumber
    \end{align}
    where $\ast$ is the critical constant. 
    
    The instance $I_1$ computed by chasing the output of line~\ref{bagchase:init} using $\inverse{\policyViews}$ will consist of the facts 
    \begin{align}
        \pat(\lnu_{\pid},\lnu_{\name},\ast,\lnu_{\post})   && \hospn(\lnu_{\pid},\ast)  &&   \onc(\lnu''_{\pid},\ast,\ast)                                       \tag{$I_1$} \nonumber \\
        \pat(\lnu'_{\pid},\lnu'_{\name},\lnu_{\ethn},\ast) && \hosps(\lnu'_{\pid},\ast) &&  \school(\lnu'''_{\sid},\lnu'''_{\name},\ast,\lnu'''_{\post})         \nonumber
    \end{align}
    where the constants prefixed by $\lnu$ are labeled nulls created while chasing $\critical{\source}$ with the inverse mappings. 
    Since there exists no homomorphism from the body of any s-t tgd 
    into $I_1$ mapping an exported variable into a labeled null,
    $\Sigma_\approx$ will be empty, see Definition~\ref{definition:derived-egd}.
    Thus, ${\visins{\policyViews}{\source} = I_1}$.
    
    We next present the computation of ${\visins{\Sigma_{st}}{\source} = \bigcup \nolimits_{\beta\in \vchase{\Sigma_{st}}{\source}} \beta}$.
    The instance $I'_1$ computed by chasing the output of line~\ref{bagchase:init} by $\inverse{\Sigma_{st}}$ will consist of the facts 
    \begin{align}
        \pat(\lnu_{\pid},\lnu_{\name},\ast,\lnu_{\post})   && \hospn(\lnu_{\pid},\ast)  && \school(\lnu''_{\sid},\lnu''_{\name},\ast,\lnu'_{\post})                \tag{$I'_1$} \nonumber \\
        \pat(\lnu'_{\pid},\lnu'_{\name},\lnu_{\ethn},\ast) && \hospn(\lnu'_{\pid},\ast) && \onc(\lnu''_{\sid},\lnu''_{\tr},\lnu''_{\pr})                           \nonumber
    \end{align}
    Since there exists a homomorphism from the body of $\mu_e$ into $I'_1$ mapping the exported variable $e$ into the labeled null $\lnu_{\ethn}$,
    and since there exists another homomorphism from the body of $\mu_c$ into $I'_1$ mapping the exported variable $c$ into the labeled null $\lnu_{\post}$, 
    $\Sigma_\approx$ will comprise the egds $\epsilon_1$ and $\epsilon_2$ shown below
    \begin{align}
        \pat(\pid,\name,\ethn,\post) \wedge \hospn(\pid,\dis)     &\rightarrow \ethn \approx \ast    \tag{$\epsilon_1$} \nonumber \\
        \pat(\pid,\name,\ethn,\post) \wedge \hospn(\pid,\dis)     &\rightarrow \post \approx \ast    \tag{$\epsilon_2$} \nonumber
    \end{align} 
    The last step of the visible chase involves chasing $I'_1$ 
    using $\Sigma_\approx$. WLOG, assume that the chase considers first $\epsilon_1$ and then $\epsilon_2$.
    During the first step of the chase, there exists a homomorphism
    from $\body(\epsilon_1)$ into $I'_1$.
    Hence, ${\lnu_{\ethn} = \ast}$.
    During the second step of the chase, there exists a homomorphism
    from $\body(\epsilon_2)$ into $I'_1$ and, 
    hence, ${\lnu_{\post} = \ast}$.
    The instance computed at the end of the second round of the chase will consist of the facts  
    \begin{align}
        \pat(\lnu_{\pid},\lnu_{\name},\ast,\ast)  && \hospn(\lnu_{\pid},\ast) && \hospn(\lnu'_{\pid},\ast)                 \label{example:crit-back-st-final} \\
        \school(\lnu''_{\sid},\lnu''_{\name},\ast,\lnu'_{\post}) && \onc(\lnu''_{\sid},\lnu''_{\tr},\lnu''_{\pr})          \nonumber
    \end{align}
    Since there exists no active trigger for $\epsilon_1$ or $\epsilon_2$ in 
    the instance of Eq.~\eqref{example:crit-back-st-final}, the chase will terminate.
    
     The facts in $\visins{\Sigma_{st}}{\source}$ will be organized into the following bags $\bag_1$--$\bag_5$ (one bag per line)
    \begin{align}
        \studentOnc(\ethn)        &\xrightarrow{{\langle \inverse{\mu_s}, h_1 \rangle}} \school(\lnu''_{\sid},\lnu''_{\name},\ast,\lnu'_{\post}), \onc(\lnu''_{\sid},\lnu''_{\tr},\lnu''_{\pr})            &\nonumber\\        
        \countyreason(\post,\dis) &\xrightarrow{{\langle \inverse{\mu_c}, h_2 \rangle}} \pat(\lnu'_{\pid},\lnu'_{\name},\lnu_{\ethn},\ast), \hospn(\lnu'_{\pid},\ast)                                      &\nonumber\\
        \ethreason(\ethn,\dis)    &\xrightarrow{{\langle \inverse{\mu_e}, h_3 \rangle}} \pat(\lnu_{\pid},\lnu_{\name},\ast,\lnu_{\post}), \hospn(\lnu_{\pid},\ast)                                         &\nonumber\\
        \pat(\lnu'_{\pid},\lnu'_{\name},\lnu_{\ethn},\ast), \hospn(\lnu'_{\pid},\ast) &\xrightarrow{\langle \epsilon_1, h_4 \rangle} \pat(\lnu'_{\pid},\lnu'_{\name},\ast,\ast), \hospn(\lnu'_{\pid},\ast) &\nonumber\\
        \pat(\lnu_{\pid},\lnu_{\name},\ast,\lnu_{\post}), \hospn(\lnu_{\pid},\ast)    &\xrightarrow{\langle \epsilon_2, h_5 \rangle} \pat(\lnu_{\pid},\lnu_{\name},\ast,\ast), \hospn(\lnu_{\pid},\ast)    &\nonumber
    \end{align}
    \begin{align}
        h_1 &= \{\pid \mapsto \lnu'_{\pid}, \name \mapsto \lnu'_{\name}, \ethn \mapsto \lnu_{\ethn}, \post \mapsto \ast, \dis \mapsto \ast \} \nonumber \\
        h_2 &= \{\post \mapsto \ast, \dis \mapsto \ast \} \nonumber \\
        h_3 &= \{\ethn \mapsto \ast, \dis \mapsto \ast \} \nonumber \\
        h_4 &= \{\pid \mapsto \lnu'_{\pid}, \name \mapsto \lnu'_{\name}, \ethn \mapsto \lnu_{\ethn}, \post \mapsto \ast, \dis \mapsto \ast \} \nonumber \\
        h_5 &= \{\pid \mapsto \lnu_{\pid}, \name \mapsto \lnu_{\name}, \ethn \mapsto \ast, \post \mapsto \lnu_{\post}, \dis \mapsto \ast \}   \nonumber
    \end{align}  
    The contents of the bags correspond to the right-hand side of the arrows.
    However, for presentation purposes, we also show the related dependency $\delta$ and the homomorphism $h$
    that lead to the derivation of each bag (shown at the top of each arrow), as well as, 
    the facts in $h(\body(\delta))$ (left-hand side of each arrow).
\end{example}  
   
\subsection{Preserving the privacy of policy views}\label{sec:privacy:policy_views}

We consider a 
mapping consisting of CQ views
$\mappingsOriginal = (\source,\target,\Sigma)$ to be safe w.r.t.
a view mapping consisting of CQ views $\mappingsOriginal_{\viewSchema} = (\source,\viewSchema,\policyViews)$, if $\mappingsOriginal$ 
does not disclose the information that is also not disclosed by $\mappingsOriginal_{\viewSchema}$.
Definition~\ref{definition:preservation} and Theorem~\ref{theorem:preservation_under_kernel_property} 
presented below formalize our notion of privacy preservation and show that there exists a simple process
for verifying whether $\mappingsOriginal$ is safe w.r.t. $\mappingsOriginal_{\viewSchema}$.

\begin{definition}\label{definition:preservation}
    A mapping ${\mappingsOriginal_2 = (\source,\target_2,\Sigma_2)}$ 
    \emph{preserves the privacy of a mapping ${\mappingsOriginal_1 = (\source,\target_1,\Sigma_1)}$ on all instances of $\source$}, if 
    for each constants-free CQ $\secret$ over $\source$, if $\secret$ is not disclosed by $\mappingsOriginal_1$  
    on any instance of $\source$, then $\secret$ is not disclosed by $\mappingsOriginal_2$ on any instance of $\source$.       
\end{definition}

\begin{restatable}{theorem}{thmpreservationkernel}\label{theorem:preservation_under_kernel_property}
    A mapping ${\mappingsOriginal_2 = (\source,\target_2,\Sigma_2)}$ 
    \emph{preserves the privacy of a mapping ${\mappingsOriginal_1 = (\source,\target_1,\Sigma_1)}$ 
    on all instances of $\source$}, if and only if there exists a homomorphism $h$ from 
    $\visins{\Sigma_2}{\source}$ into $\visins{\Sigma_1}{\source}$,
    such that ${h(\ast) = \ast}$.
\end{restatable}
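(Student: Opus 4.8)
The plan is to convert the quantification over secret queries in Definition~\ref{definition:preservation} into a single homomorphism test between the two universal source instances, combining the characterisation of non-disclosure recalled above (from \cite{MCCK17}) with the classical correspondence between homomorphism existence and conjunctive-query containment, i.e.\ the ``canonical query'' principle. Throughout write $V_i$ for $\visins{\Sigma_i}{\source}$. First I would reduce to boolean secrets: for a constants-free CQ $\secret(\vec x)$, the condition $\secret(\instance')=\emptyset$ of Definition~\ref{definition:disclosure} holds exactly when the boolean sentence $\exists\vec x\,\secret(\vec x)$ fails in $\instance'$, so $\mappingsOriginal$ discloses $\secret$ iff it discloses its existential closure. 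Hence in Definition~\ref{definition:preservation} it suffices to range over boolean constants-free CQs, and for these the recalled result of \cite{MCCK17} states that $\secret$ is not disclosed by $\mappingsOriginal_i$ iff there is no homomorphism $\secret\to V_i$. Taking contrapositives, $\mappingsOriginal_2$ preserves the privacy of $\mappingsOriginal_1$ iff for every boolean constants-free CQ $\secret$ a homomorphism $\secret\to V_2$ entails a homomorphism $\secret\to V_1$.

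The direction ($\Leftarrow$) is then immediate. Given a homomorphism $h\colon V_2\to V_1$ with $h(\ast)=\ast$ and any $\secret$ with a homomorphism $g\colon\secret\to V_2$, the composite $h\circ g$ is a homomorphism $\secret\to V_1$; so the implication of the previous paragraph holds and $\mappingsOriginal_2$ preserves the privacy of $\mappingsOriginal_1$. Note that $h(\ast)=\ast$ is not actually needed for this direction, but merely records that $h$ fixes the only constant around.

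For ($\Rightarrow$) I would manufacture the homomorphism from a canonical query. Let $\secret^\star$ be the boolean CQ read off $V_2$ by turning every labelled null into a distinct fresh variable and the critical constant $\ast$ into one further fresh variable $z$; this $\secret^\star$ is constants-free and admits the obvious homomorphism $\secret^\star\to V_2$ sending each null-variable to its null and $z\mapsto\ast$. By the equivalence just derived, preservation yields a homomorphism $g\colon\secret^\star\to V_1$, and reading $g$ back through the construction gives a map $h\colon V_2\to V_1$ that sends facts to facts, with $h(\ast)=g(z)$.

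The remaining, and I expect principal, difficulty is to guarantee $h(\ast)=\ast$ rather than merely $h(\ast)\in\nulls$: a constants-free secret cannot on its own pin the critical constant, so abstract homomorphism theory only delivers a map in which $\ast$ may be absorbed into a labelled null of $V_1$. To force $g(z)=\ast$ I would exploit the special status of $\ast$ in Algorithm~\ref{alg:bagchase}: $\ast$ is the unique constant of $V_1$, it is exactly the value deposited in the exported positions during the forward chase of $\critical{\source}$, and the egd-saturation phase (the chase with $\Sigma_{\approx}$) has already collapsed onto $\ast$ every labelled null that the mappings render visible. The argument I would pursue is that $z$ occurs in $\secret^\star$ in the full pattern of exported positions carried by $\ast$ across all relations of $V_2$ simultaneously, so that no single labelled null of the saturated instance $V_1$ can host this pattern, compelling $g(z)=\ast$. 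Establishing this \emph{rigidity of the critical constant} from the concrete structure produced by Algorithm~\ref{alg:bagchase}, rather than from generic homomorphism arguments, is the technical heart of the proof and the step where the non-boolean, view-induced shape of the mappings must be used.
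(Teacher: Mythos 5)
Your reduction to boolean secrets is where the argument breaks, and it is not a presentational choice that can be patched later: it discards exactly the information that the condition $h(\ast)=\ast$ encodes. The characterisation the paper actually uses for a non-boolean constants-free CQ $\secret$ (Lemma~\ref{proposition:disclosure}) is not ``no homomorphism $\secret\to\visins{\Sigma}{\source}$'' but ``$\vec{\ast}\notin\secret(\visins{\Sigma}{\source})$'', i.e.\ no homomorphism mapping the \emph{free} variables of $\secret$ to the critical constant. Disclosure of a non-boolean secret means a specific answer tuple becomes certain, not merely that the answer set is nonempty; replacing $\secret$ by its existential closure collapses these two notions, trivialises the non-boolean extension of \cite{MCCK17} that the theorem is about, and turns the target condition into ``some homomorphism $\visins{\Sigma_2}{\source}\to\visins{\Sigma_1}{\source}$ exists'', which is strictly weaker than the statement. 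Your own remark that $h(\ast)=\ast$ ``is not actually needed'' for ($\Leftarrow$) is a symptom of this: under the correct semantics it is needed, since a witness of $\vec{\ast}\in\secret(\visins{\Sigma_2}{\source})$ composed with $h$ only yields $\vec{\ast}\in\secret(\visins{\Sigma_1}{\source})$ because $h$ fixes $\ast$.

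The consequence for ($\Rightarrow$) is fatal: the ``rigidity of the critical constant'' you defer to is false, so the plan cannot be completed. Take $\Sigma_1=\{R(x)\rightarrow \exists y\, T(y)\}$ and $\Sigma_2=\{R(x)\rightarrow T'(x)\}$. Then $\visins{\Sigma_1}{\source}$ contains $R(\lnu)$ for a labelled null $\lnu$ (the variable $x$ is not exported, and no derived egd fires), while $\visins{\Sigma_2}{\source}$ contains $R(\ast)$. The boolean canonical query of $\visins{\Sigma_2}{\source}$ maps into $\visins{\Sigma_1}{\source}$ with your variable $z$ sent to $\lnu$ --- the full occurrence pattern of $\ast$ is hosted by a null --- yet $\mappingsOriginal_2$ discloses the identity of the elements of $R$ while $\mappingsOriginal_1$ does not, so privacy is \emph{not} preserved. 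The paper never needs rigidity: in its Lemma~\ref{theorem:boolean-query-preservation} the canonical query of $I_1$ turns labelled nulls into existentially quantified variables but turns constants into \emph{free} variables, and the hypothesis is not mere satisfiability of that query in $I_2$ but that the tuple of constants itself is among its answers; this forces the induced homomorphism to fix every constant, in particular $\ast$. Keeping the secrets non-boolean, with $\vec{\ast}$ as the required answer tuple, is precisely the mechanism that delivers $h(\ast)=\ast$, and it is the step your proposal replaces with an unproved (and, as above, false) structural claim about Algorithm~\ref{alg:bagchase}.
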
 

\begin{proof}
	(Sketch) First we show that the following holds
	
	\begin{lemma}\label{proposition:disclosure}
	    A mapping ${\mappingsOriginal = (\source,\target,\Sigma)}$ 
	    \emph{does not disclose a constants-free CQ $\secret$ over $\source$ on any instance of $\source$}, iff 
	    ${\vec{\ast} \not\in p(J)}$, where ${J = \visins{\Sigma_{st}}{\source}}$. 
	\end{lemma}
	\begin{proof}
		By adapting the proof technique of Theorem 16 from \cite{MCCK17}, we can show that ${J = \visins{\Sigma_{st}}{\source}}$ is a \emph{universal} source instance $\visins{\Sigma}{\source}$ 
		satisfying the following property: for each pair of source instances $I$ and $I'$, such that $I'$ is indistinguishable from $I$ 
		w.r.t. the mapping $\mappingsOriginal$, there exists a homomorphism $h$ from $I'$ into $\visins{\Sigma}{\source}$ 
		mapping each schema constant into the critical constant $\ast$. 
		Due to the existence of a homomorphism $h$ from $I'$ into $\visins{\Sigma}{\source}$, for each pair of indistinguishable source instances $I$ and $I'$, 
		we can see that if ${\vec{\ast} \not\in p(J)}$ for a constants-free CQ $p$, then ${p(I') = \emptyset}$. 
		Due to the above and due to Definition~\ref{definition:disclosure}, it follows that 
		${\mappingsOriginal = (\source,\target,\Sigma)}$ 
	    	does not disclose a constants-free CQ $\secret$ over $\source$ on any instance of $\source$. 
	\end{proof}

	Lemma~\ref{proposition:disclosure} states that, in order to check if a constants-free CQ 
	is safe according to Definition~\ref{definition:disclosure}, we need to check if the critical tuple is 
	among the answers to $\secret$ over the instance computed by ${\vchase{\Sigma}{\source}}$. 
	Next, we show the following lemma. 

    \begin{lemma}\label{theorem:boolean-query-preservation}
        Given two instances $I_1$ and $I_2$, the following are equivalent
        \begin{enumerate}
             \item for each CQ $\secret$, if ${\vec{u} \in \secret(I_1)}$, then ${\vec{u} \in \secret(I_2)}$, where $\vec{u}$ is a vector of constants 
    
             \item there exists a homomorphism from $I_1$ to $I_2$ preserving the constants of $I_1$
        \end{enumerate}    
    \end{lemma}
     
    \begin{proof}[Proof of Lemma~\ref{theorem:boolean-query-preservation}] 
        (2)$\Rightarrow$(1). Suppose that there exists a homomorphism $h$ 
        from $I_1$ to $I_2$ preserving the constants of $I_1$.
        Suppose also that ${\vec{u} \in \secret(I_1)}$, with $\secret$ being a CQ.
        This means that there exists a homomorphism $h_1$ from $\secret$ into $I_1$ mapping 
        each free variable $x_i$ of $\secret$ into $u_i$, for each ${1 \leq i \leq n}$, 
        where $n$ is the number of free variables of $\secret$. 
        Since the composition of two homomorphisms is a homomorphism
        and since $h$ preserves the constants of $I_1$ due to the base assumptions,  
        this means that ${h \circ h_1}$ is a homomorphism from 
        $\secret$ into $I_2$ mapping each free variable $x_i$ of $\secret$ into $t_i$, 
        for each ${1 \leq i \leq n}$.
        This completes this part of the proof.   
    
        (1)$\Rightarrow$(2). 
        Let $\secret_1$ be a CQ formed by creating  
        a non-ground atom ${R(y_1,\dots,y_n)}$ for each ground atom ${R(u_1,\dots,u_n) \in I_1}$,  
        by taking the conjunction of these non-ground atoms and by converting into an existentially
        quantified variable every variable created out of some labelled null. 
        Let $\vec{x}$ denote the free variables of $\secret_1$ and let ${n = |\vec{x}|}$. 
        From the above, it follows that there exists a homomorphism $h_1$ 
        from $\secret_1$ into $I_1$ mapping each ${x_i \in \vec{x}}$ 
        into some constant occurring in $I_1$. Let ${\vec{u} \in \secret_1(I_1)}$. 
        From (1), it follows that ${\vec{u} \in \secret_1(I_2)}$ and,
        hence, there exists a homomorphism $h_2$ from $p_1$ into $I_2$
        mapping each ${x_i \in \vec{x}}$ into $u_i$, for each ${1 \leq i \leq n}$.
        Since $h_1$ ranges over all constants of $I_1$ and since  
        ${h_1(x_i) = h_2(x_i)}$ holds for each ${1 \leq i \leq n}$, 
        it follows that there exists a homomorphism from $I_1$ to $I_2$ preserving the constants of $I_1$.         
        This completes the second part of the proof.    
      \end{proof}
\newpage
    Lemma~\ref{theorem:boolean-query-preservation} can be restated as follows
    \begin{lemma}\label{theorem:boolean-query-preservation-negation}
        Given two instances $I_1$ and $I_2$, the following are equivalent
        \begin{enumerate}
            \item for each CQ $\secret$, if ${\vec{t} \not\in p(I_2)}$, then ${\vec{t} \not\in p(I_1)}$
    
            \item there exists a homomorphism from $I_1$ to $I_2$  
        \end{enumerate}   
    \end{lemma}
    
    We are now ready to return to the main part of the proof.  
    Given a CQ $\secret$ over a source schema $\source$, and a mapping $\mappingsOriginal$
    defined as the triple $(\source,\target,\Sigma)$, where $\target$ is a target schema and 
    $\Sigma$ is a set of s-t dependencies,
    we know from Proposition~\ref{proposition:disclosure} that
    if $\mappingsOriginal$ discloses $\secret$ on some instance of $\source$, 
    then there exists a homomorphism of $\secret$ into $\vchase{\Sigma}{\source}$ mapping the free variables of 
    $\secret$ into the critical constant $\ast$.
    
    From the above, we know that $\mappingsOriginal_2$ does not preserve the privacy of $\mappingsOriginal_1$ 
    if there exists a CQ $\secret$ over $\source$, 
    such that ${\vec{\ast} \not \in J_1}$ and ${\vec{\ast} \in J_2}$, 
    where ${J_1 = \visins{\Sigma_1}{\source}}$ and ${J_2 = \visins{\Sigma_2}{\source}}$.
    We will now prove that $\mappingsOriginal_2$ preserves the privacy of $\mappingsOriginal_1$ iff there exists a
    homomorphism from $J_2$ into $J_1$ that preserves the critical constant $\ast$. 
    This will be referred to as conjecture ($C$).  
    
    ($\Rightarrow$) If $\mappingsOriginal_2$ preserves the privacy of $\mappingsOriginal_1$, then for each CQ 
    $\secret$, if ${\vec{\ast} \not\in \secret(J_1)}$, then ${\vec{\ast} \not\in \secret(J_2)}$.  
    From the above and from Lemma~\ref{theorem:boolean-query-preservation-negation}, it follows that 
    there exists a homomorphism ${\phi:J_2 \rightarrow J_1}$, such that ${\phi(\ast) = \ast}$.  
    
    ($\Leftarrow$) The proof proceeds by contradiction. Assume that there exists a
    homomorphism $h$ from $J_2$ into $J_1$ preserving $\ast$, 
    but $\mappingsOriginal_2$ does not preserve the privacy of $\mappingsOriginal_1$. 
    We will refer to this assumption as assumption ($A_1$).  
    From assumption ($A_1$) and the discussion above it follows 
    that there exists a CQ $\secret$ over $\source$ such that 
    ${\vec{\ast} \not\in \secret(J_1)}$ and ${\vec{\ast} \in \secret(J_2)}$. 
    Let $h_2$ be the homomorphism from $\secret$ into $J_2$ mapping its free variables into $\ast$.
    Since the composition of two homomorphisms is a homomorphism, this means that 
    $h \circ h_2$ is a homomorphism from $\secret$ into $J_1$ mapping its free variables into $\ast$, 
    i.e., ${\vec{\ast} \in \secret(J_1)}$. This contradicts our original assumption and hence concludes the proof of conjecture ($C$).
    Conjecture ($C$) witnesses the decidability of the instance-independent privacy preservation problem: 
    in order to verify whether $\mappingsOriginal_2$ preserves the privacy of $\mappingsOriginal_1$ we only need to check if there exists 
    a homomorphism ${\phi:\visins{\Sigma_2}{\source} \rightarrow \visins{\Sigma_1}{\source}}$, such that ${\phi(\ast) = \ast}$.  
\end{proof}

Theorem~\ref{theorem:preservation_under_kernel_property} states that in order to 
verify whether $\mappingsOriginal_2$ is safe w.r.t. $\mappingsOriginal_1$, 
we need to compute $\visins{\Sigma_1}{\source}$ and $\visins{\Sigma_2}{\source}$ 
and check if there exists a homomorphism from the second instance into the first one 
that maps $\ast$ into itself. 
If there exists such a homomorphism, we say that $\visins{\Sigma_1}{\source}$ 
is \emph{safe} w.r.t. $\visins{\Sigma_2}{\source}$, or simply safe, and we say that it is unsafe otherwise.  
 
\begin{example}
    Continuing with Example~\ref{example:running:intro}, we can see that the s-t tgds are not safe w.r.t. 
    the policy views according to Theorem~\ref{theorem:preservation_under_kernel_property}, since there does not exist a homomorphism from
    the instance $\visins{\Sigma_{st}}{\source}$ into
    the instance $\visins{\policyViews}{\source}$.
    This means that there exists information which is disclosed by 
    $\Sigma_{st}$ in some instance that satisfies $\Sigma_{st}$, but it is not disclosed by $\policyViews$.  
    Indeed, from ${\school(\lnu''_{\sid},\lnu''_{\name},\ast,\lnu'_{\post})}$ and ${\onc(\lnu''_{\sid},\lnu''_{\tr},\lnu''_{\pr})}$,
    we can see that we can potentially leak the identity of a student who has been to an oncology department. 
    This can happen if there exists only one student in the school coming from a specific ethnicity group
    and this ethnicity group is returned by $\mu_s$. 
    Please note that the policy views are safe w.r.t. this leak. Indeed, it is impossible to derive 
    this information through reasoning over the returned tuples under the input instance and the views 
    $\viewPred_3$ and $\viewPred_4$. 
    
    Furthermore, by looking at the facts ${\pat(\lnu_{\pid},\lnu_{\name},\ast,\ast)}$ and ${\hospn(\lnu_{\pid},\ast)}$, 
    we can see that we can potentially leak the identity and the disease of a patient who has been admitted to some hospital 
    in the north of UK. This can happen if 
    there exists only one patient who relates to the county and the ethnicity group returned by $\mu_e$ and $\mu_c$.  
    Note that the policy views $\viewPred_1$ and $\viewPred_2$ do not leak this information, 
    since it is impossible to obtain the county and the ethnicity group of an NHS patient at the same time.
\end{example}

\section{Repairing unsafe mappings} \label{sec:repair}


In Section~\ref{sec:privacy} we presented our privacy preservation protocol
and a technique for verifying whether a mapping is safe w.r.t. another one, 
over all source instances. This section presents an algorithm for repairing an unsafe mapping 
$\mappingsOriginal$ w.r.t. a set of policy views $\policyViews$.

Algorithm~\ref{alg:repair} summarizes the steps of the proposed algorithm. The inputs to 
it are, apart from $\Sigma$ and $\policyViews$, a positive integer $n$ 
which will be used during the second step of the repairing process and a preference 
mechanism $\pref$ for ranking the possible repairs. In the simplest scenario, 
the preference mechanism implements a fixed function for ranking the different repairs. 
However, it can also employ supervised learning techniques 
in order to progressively learn the user preferences by looking at his prior decisions.

Since a mapping $\mappingsOriginal$ is safe w.r.t. $\policyViews$
if the instance $\visins{\Sigma}{\source}$ is safe according to Theorem~\ref{theorem:preservation_under_kernel_property}, 
Algorithm~\ref{alg:repair} rewrites the tgds in $\mappingsOriginal$, 
such that the derived visible chase instances are safe.
The rewriting takes place in two steps. 
The first step rewrites $\Sigma$ into a \emph{partially-safe} set of s-t dependencies $\Sigma_1$, 
while the second step rewrites the output of the first one into a new set of s-t 
dependencies $\Sigma_2$, such that $\visins{\Sigma_2}{\source}$ is safe.
As we will explain later on, partial-safety ensures that the intermediate instance $I_1$ produced by $\vchase{\Sigma_1}{\source}$ 
at line~\ref{bagchase:back} of Algorithm~\ref{alg:bagchase} is safe, but it does not provide strong privacy guarantees. 
The benefit of this two-step approach is that it allows repairing one or a small set of dependencies at a time.  

\begin{algorithm}[t]
\scriptsize
\caption{$\repair(\Sigma,\policyViews,\pref,n)$}\label{alg:repair}
\begin{algorithmic}[1]
    \State $\Sigma_1 \defeq \fphase(\Sigma,\policyViews,\pref)$      \label{alg:repair:first}                   
    \State $\Sigma_2 \defeq \sphase(\Sigma_1,\policyViews,\pref,n)$  \label{alg:repair:second}        
    \State \textbf{return} $\Sigma_2$ 
\end{algorithmic}
\end{algorithm}

\subsection{Computing partially-safe mappings}\label{sec:repair:partial-safety}

Since the problem of safety is reduced to the problem of 
checking for a homomorphism from $\visins{\Sigma}{\source}$ into 
$\visins{\policyViews}{\source}$, a first test 
towards checking for such a homomorphism is to look if the mappings in 
$\Sigma$ would lead to such a homomorphism or not. 
For instance, by looking at $\mu_s$ in Example~\ref{example:running:intro} it is easy to see that 
it leaks sensitive information, since it involves a join between students and
oncology departments, which does not occur in 
$\visins{\policyViews}{\source}$.

\begin{definition}\label{definition:potential-safety}
    A mapping ${\mappingsOriginal = (\source,\target,\Sigma)}$ is 
    \emph{partially-safe w.r.t.\\ ${\mappingsOriginal_{\viewSchema} = (\source,\viewSchema,\policyViews)}$ 
    on all instances of $\ \source$}, if there exists a homomorphism from
    ${\chase{\inverse{\Sigma}}{\critical{\target}} \setminus \critical{\target}}$ 
    into $\visins{\policyViews}{\source}$. 
\end{definition}

From Algorithm~\ref{alg:bagchase}, it follows that 
$\Sigma$ is partially-safe iff the intermediate instance $I_1$ 
computed by $\vchase{\Sigma}{\source}$ is safe.  

\begin{proposition}\label{proposition:potential-safety}
    A mapping ${\mappingsOriginal = (\source,\target,\Sigma)}$ 
    is \emph{partially-safe w.r.t.
    ${\mappingsOriginal_{\viewSchema} = (\source,\viewSchema,\policyViews)}$ 
    on all instances of $\source$}, if 
    for each ${\mu \in \Sigma}$, there exists a homomorphism 
    from $\body(\mu)$ into $\visins{\policyViews}{\source}$ mapping each 
    ${x \in \frontier{\mu}}$ into the critical constant $\ast$. 
\end{proposition}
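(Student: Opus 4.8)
The plan is to read off the precise shape of the instance $K \defeq \chase{\inverse{\Sigma}}{\critical{\target}} \setminus \critical{\target}$ from the backward chase, and then to glue the per-tgd homomorphisms granted by the hypothesis into a single homomorphism from $K$ into $\visins{\policyViews}{\source}$; partial-safety then follows directly from Definition~\ref{definition:potential-safety}.

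First I would analyse the backward chase. Each $\inverse{\mu}$ has $\head(\mu)$ as its body and $\body(\mu)$ as its head, and $\head(\mu)$ ranges over the target schema $\target$. Since $\critical{\target}$ contains only the critical constant $\ast$, the unique trigger for $\inverse{\mu}$ on $\critical{\target}$ maps every variable of $\head(\mu)$ to $\ast$; this trigger is active because $\body(\mu)$ is over $\source$ and hence cannot already be matched in the target-only instance $\critical{\target}$. Firing it produces a copy $g_\mu(\body(\mu))$ of the body of $\mu$ in which every exported variable $x \in \frontier{\mu}$ is frozen to $\ast$ (since such an $x$ also occurs in $\head(\mu)$) and every non-exported body variable is replaced by a fresh labeled null. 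As each application introduces its own fresh nulls, these nulls are pairwise distinct across distinct tgds, so the only term shared among the different frozen bodies is $\ast$. Hence $K = \bigcup_{\mu \in \Sigma} g_\mu(\body(\mu))$, a union of frozen bodies overlapping only in $\ast$; this is the source-side instance produced by backward-chasing with $\inverse{\Sigma}$, cf.\ line~\ref{bagchase:back} of Algorithm~\ref{alg:bagchase}.

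Next I would assemble the homomorphism. By hypothesis, for every $\mu \in \Sigma$ there is a homomorphism $h_\mu$ from $\body(\mu)$ into $\visins{\policyViews}{\source}$ with $h_\mu(x) = \ast$ for each $x \in \frontier{\mu}$. I define a map $H$ on the terms of $K$ by setting $H(\ast) = \ast$ and $H(g_\mu(x)) = h_\mu(x)$ for each non-exported body variable $x$ of each $\mu$. This is well-defined precisely because the nulls $g_\mu(x)$ are fresh and pairwise distinct, so no null is assigned two values, while the single shared term $\ast$ is assigned consistently. A direct check shows that $H \circ g_\mu$ and $h_\mu$ agree on every term of $\body(\mu)$: on an exported variable both yield $\ast$, and on a non-exported variable both yield $h_\mu(x)$. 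Consequently $H$ sends each atom of $g_\mu(\body(\mu))$ to the corresponding atom of $h_\mu(\body(\mu))$, which lies in $\visins{\policyViews}{\source}$; since this holds for all $\mu$, the map $H$ is a homomorphism from $K$ into $\visins{\policyViews}{\source}$, and $\mappingsOriginal$ is partially-safe.

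The argument is a gluing of independent maps, so the only real obstacle is the bookkeeping of the first step: one must argue that the backward chase freezes exactly the exported variables to $\ast$ and that the fresh nulls created for different tgds never collide, so that $\ast$ is the sole point of overlap and the per-tgd homomorphisms combine without conflict. Once the shape of $K$ is pinned down, the conclusion follows because the composition of homomorphisms is again a homomorphism.
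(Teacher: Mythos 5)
Your proof is correct. The paper states Proposition~\ref{proposition:potential-safety} without an explicit proof, presenting it as an immediate consequence of the structure of the backward-chase step (line~\ref{bagchase:back} of Algorithm~\ref{alg:bagchase}), so your argument supplies exactly the justification that is left implicit: chasing $\critical{\target}$ with $\inverse{\Sigma}$ produces, for each fired $\inverse{\mu}$, a frozen copy of $\body(\mu)$ in which the exported variables are sent to $\ast$ (because they occur in the head of $\mu$, whose image in $\critical{\target}$ is all-$\ast$) and the remaining body variables to fresh, pairwise-distinct labeled nulls; the per-tgd homomorphisms $h_\mu$ granted by the hypothesis then glue along the single shared term $\ast$ into one homomorphism into $\visins{\policyViews}{\source}$, which is what Definition~\ref{definition:potential-safety} demands. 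One small overstatement: the chase fires only \emph{active} triggers, so if the frozen body of some $\mu'$ already embeds (extending the trigger on its head) into facts produced by earlier firings, that copy is never materialized; hence ${\chase{\inverse{\Sigma}}{\critical{\target}} \setminus \critical{\target}}$ is in general only \emph{contained in} $\bigcup_{\mu} g_\mu(\body(\mu))$ rather than equal to it. This does not damage the argument, since the homomorphism $H$ you build on the full union restricts to any subinstance, but the equality in your second paragraph should be weakened to an inclusion.
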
  

Note that according to Proposition~\ref{proposition:potential-safety}, in
our running example  
$\Sigma_{st}$ would be 
partially-safe, if $\mu_s \not\in \Sigma_{st}$, then since there exist homomorphisms from the bodies of 
$\mu_s$ and $\mu_c$ into $\visins{\policyViews}{\source}$, 
mapping their exported variables into $\ast$. 
It is also easy to show the following 

\begin{remark}\label{remark:potential-safety}
    A mapping ${\mappingsOriginal = (\source,\target,\Sigma)}$ 
    is \emph{safe w.r.t.
    ${\mappingsOriginal_{\viewSchema} = (\source,\viewSchema,\policyViews)}$ 
    on all instances of $\source$}, only if it is partially-safe w.r.t.
    $\mappingsOriginal_{\viewSchema}$ on all instances of $\source$. 
\end{remark}  

Proposition~\ref{proposition:potential-safety} presents a quite convenient, 
yet somewhat expected, finding: in order to obtain a partially-safe mapping, 
it suffices to repair each s-t dependency \emph{independently of the others}. 
Furthermore, the repair of each ${\mu \in \Sigma}$ involves breaking joins and hiding exported variables, 
such that the repaired dependency $\mu_r$ satisfies the criterion in Proposition~\ref{proposition:potential-safety}.

We make use of the result of Proposition~\ref{proposition:potential-safety} 
in Algorithm~\ref{alg:frepair}. Algorithm~\ref{alg:frepair} obtains, for each 
${\mu \in \Sigma}$, a set of rewritings $\mathcal{R}_{\mu}$, 
out of which we will choose the best rewriting according to $\pref$. 
The set $\mathcal{R}_{\mu}$ consists of \emph{all} rewritings
that differ from $\mu$ w.r.t. the variable repetitions 
in the bodies of the rules and the exported variables. 
For performance reasons, we do not examine rewritings 
that introduce atoms in the bodies of the rules. 
However, this does not compromise the \emph{completeness} of  
Algorithm~\ref{alg:repair} as we show at end of this section.
Below, we present the steps of Algorithm~\ref{alg:frepair}.

For each s-t tgd $\mu$ and for each atom ${B \in \body(\mu)}$, 
Algorithm~\ref{alg:frepair} constructs a fresh atom $C$ 
and adds $C$ to a set $\mathcal{C}$. The set of atoms $\mathcal{C}$
provides us with the means to identify all repairs of $\mu$ 
that involve breaking joins and hiding exported variables.
In particular, each homomorphism $\xi$ from $\mathcal{C}$ into $\visins{\policyViews}{\source}$
corresponds to one repair of $\mu$. In lines~\ref{alg:B:begin}--\ref{alg:B:end}, 
Algorithm~\ref{alg:frepair} modifies each atom ${B \in \body(\mu)}$ by taking into account 
prior body atom modifications. The prior modifications are accumulated in 
the relation $\rho$ and the mapping $\psi$. The relation $\rho$ keeps for each variable $x$ from $\body(\mu)$, 
the fresh variables that were used to replace $x$
during prior steps of the repairing process, while 
$\psi$ is a substitution from the partially 
repaired body into $\visins{\policyViews}{\source}$. In particular,
at the end of the $i$-th iteration of the loop in line~\ref{alg:B:begin}, 
$\psi$ holds the substitution from the first repaired $i$ atoms from 
$\body(\mu)$ into $\visins{\policyViews}{\source}$.   
We adopt this approach instead of replacing variable 
$x$ in position $p$ always by a fresh variable, 
in order to minimize the number of the joins we break.

Below, we describe how Algorithm~\ref{alg:frepair} modifies each body atom of $\mu$, 
w.r.t. a homomorphism $\xi$, lines~\ref{alg:xi:begin}--\ref{alg:xi:end}.   
Let ${C = \nu(B)}$ be the fresh body atom that was constructed out of $B$ in line~\ref{alg:C:init}.   
For each atom ${B \in \body(\mu)}$ and for each ${p \in \positions{B}}$, 
if the variable $y$ in position $p$ of $C$ is not mapped to the critical constant $\ast$
via $\xi$ and $\pos{B}{p}$ is a exported variable, this means that 
\emph{the variable sitting in position $p$ of $B$ should not be exported}
(see first condition in line~\ref{alg:x:begin}). 
Similarly, if the variable sitting in position $p$ of $B$ is mapped to 
a different constant than the one that $y$ maps via $\xi$, then  
this means that \emph{the variable sitting in position $p$ of $B$ 
introduces an unsafe join}
(see second condition in line~\ref{alg:x:begin}).
In the presence of these violations, we must replace variable $x$ in position $p$ of $B$,
either by a variable that was used in a prior step of the repairing process,
line \ref{alg:x:prior:begin}--\ref{alg:x:prior:end}), or by a fresh variable,
lines~\ref{alg:x:fresh:begin}--\ref{alg:x:fresh:end}.
Otherwise, if there is no violation so far, then we add the mapping
${\{x \mapsto \xi(y)\}}$ to $\psi$, if it is not already there,
lines~\ref{alg:no-violation:start}--\ref{alg:no-violation:end}.
Finally, the algorithm chooses the best repair according to the preference function, 
lines~\ref{alg:pref:start}--\ref{alg:pref:end}.

\begin{algorithm}[tb!]
\scriptsize
\caption{$\fphase(\Sigma,\policyViews,\pref)$}\label{alg:frepair}
\begin{algorithmic}[1]
    \For{\textbf{each} $\mu \in \Sigma$}
          \State $\nu \defeq \emptyset$, $\mathcal{C} \defeq \emptyset$
          \For{\textbf{each} $B \in \body(\mu)$, where $B = R(\vec{x})$}                                               						 			\label{alg:C:start}
            \State \textbf{create} a vector of fresh variables $\vec{y}$
            \State \textbf{create} the atom $C = R(\vec{y})$                                                           											 \label{alg:C:init}
            \State \textbf{add} $(B,C)$ to $\nu$
            \State \textbf{add} $C$ to $\mathcal{C}$
         \EndFor                                                                                                        														\label{alg:C:end}
         
         \State $\mathcal{R}_{\mu} := \emptyset$

         \For{\textbf{each} homomorphism $\xi : \mathcal{C} \to \visins{\policyViews}{\source}$}                       							 \label{alg:xi:begin}
             \State $\rho := \emptyset$, $\psi := \emptyset$
             \State $\mu_r := \mu$
             \For{\textbf{each} $B \in \body(\mu_r)$}                                                                   											\label{alg:B:begin}                                                         
                 \State $C=\nu(B)$																										\label{alg:C:extract}                                                         
                 \For{\textbf{each} $p \in \positions{B}$}                                                              											\label{alg:p:begin}   
                     \State $x = \pos{B}{p}$, $y = \pos{C}{p}$                                                          
                     \If{$x \in \frontier{\mu}$ \textbf{and} $* \neq \xi(y)$ \textbf{or} $x \in \dom{\psi}$  \textbf{and} $\psi(x) \neq \xi(y)$}         \label{alg:x:begin}   
                         \If{$\exists x'$ s.t. $(x,x') \in \rho$ and $\psi(x') = \xi(y)$}                               											\label{alg:x:prior:begin}
                             \State $\pos{B}{p} = x'$                                                                   													\label{alg:x:prior:end}
                         \Else                                                                                          														\label{alg:x:fresh:begin}
                             \State \textbf{create} a fresh variable $x'$                                               
                             \State \textbf{add} $(x,x')$ to $\rho$
                             \State \textbf{add} $\{x' \mapsto \xi(y)\}$ to $\psi$
                             \State $\pos{B}{p} = x'$
                         \EndIf                                                                                         														\label{alg:x:fresh:end}
                     \ElsIf{$x \not \in \dom{\psi}$}                                                                    												\label{alg:no-violation:start}
                         \State \textbf{add} $\{x \mapsto \xi(y)\}$ to $\psi$                                          									 		\label{alg:no-violation:end}
                     \EndIf																													\label{alg:x:end}
                 \EndFor                                                                                                														\label{alg:p:end}   
             \EndFor                                                                                                    														\label{alg:B:end}
             \If{$\mu_r \neq \mu$}
                 \State \textbf{add} $\mu_r$ to $\mathcal{R}_{\mu}$
             \EndIf
         \EndFor                                                                                                        														\label{alg:xi:end}
         
         \If{$\mathcal{R}_{\mu} \neq \emptyset$}                                                                    f    											\label{alg:pref:start}
             \State \textbf{choose} the best repair $\mu_r$ of $\mu$ from $\mathcal{R}_{\mu}$ based on $\pref$
             \State \textbf{remove} $\mu$ from $\Sigma$
             \State \textbf{add} $\mu_r$ to $\Sigma$
         \EndIf                                                                                                         														\label{alg:pref:end}
    \EndFor
    \State \textbf{return} $\Sigma$
\end{algorithmic}
\end{algorithm}

\begin{example}\label{example:frepair}
    We demonstrate an example of Algorithm~\ref{alg:frepair}. 
    
    Since Algorithm~\ref{alg:frepair} focuses on $\visins{\policyViews}{\source}$
    overlooking the actual views in $\policyViews$, we will not explicitly define 
    $\policyViews$. Instead, we will only assume that the visible chase computes the instance  
    \begin{flalign}
        \visins{\policyViews}{\source} = \{\R_1(\ast,\lnu_1,\lnu_2), \Sm_1(\lnu_1,\lnu_2, \lnu_2), \Sm_1(\lnu_1,\lnu_3,\ast), \Sm_1(\lnu_1,\ast,\ast)\} \nonumber 
    \end{flalign}
    where $\lnu_1$--$\lnu_3$ are labeled nulls. 
    Consider also the mapping $\mappingsOriginal$ consisting of the following s-t dependency 
    \begin{align}
        \R_1(x, y, z) \wedge \Sm_1(y, z, z) \rightarrow \T_1(x,z) \tag{$\mu_1$} \nonumber 
    \end{align}                             
    Note that $\mappingsOriginal$ is not partially-safe. 
    Algorithm~\ref{alg:frepair} computes two repairs for $\mu_1$ by applying the steps described below. 
    First, it computes the atoms ${\R_1(x_1, x_2, x_3)}$ ${\Sm_1(x_4, x_5, x_6)}$ and adds them to 
    $\mathcal{C}$, lines~\ref{alg:C:start}--\ref{alg:C:end}.
    Then, it identifies the following three homomorphisms from $\mathcal{C}$ into $\visins{\policyViews}{\source}$:
    \begin{align}
        \xi_1 &= \{x_1 \mapsto \ast, x_2 \mapsto \lnu_1, x_3 \mapsto \lnu_2, x_4 \mapsto \lnu_1, x_5 \mapsto \lnu_2, x_6 \mapsto \lnu_2\} \nonumber \\
        \xi_2 &= \{x_1 \mapsto \ast, x_2 \mapsto \lnu_1, x_3 \mapsto \lnu_2, x_4 \mapsto \lnu_1, x_5 \mapsto \lnu_3, x_6 \mapsto \ast\}   \nonumber \\
        \xi_3 &= \{x_1 \mapsto \ast, x_2 \mapsto \lnu_1, x_3 \mapsto \lnu_2, x_4 \mapsto \lnu_1, x_5 \mapsto \ast, x_6 \mapsto \ast\}     \nonumber 
    \end{align}                 
    From $\xi_1$, we can see that the joins in the body of $\mu_1$ are safe; however, it is unsafe to export $z$. 
    From $\xi_2$, we can see that is safe to reveal the third position of $\Sm_1$; however, it is unsafe to join the 
    second and the third position of $\Sm_1$.     
    Algorithm~\ref{alg:frepair} then iterates over $\xi_1$ and $\xi_2$, line~\ref{alg:xi:begin}. 
    When ${B = \R_1(x, y, z)}$ and ${p < 3}$, Algorithm~\ref{alg:frepair} computes 
    $\psi$ to $\{x \mapsto \ast, y \mapsto \lnu_1\}$, 
    since there is no violation according to line~\ref{alg:x:begin}.  
    When ${B = \R_1(x, y, z)}$ and $p=3$, however, a violation is detected.  
    This is due to the facts that $z$ is an exported variable and ${\xi(x_3) = \lnu_2}$.
    Algorithm~\ref{alg:frepair} tackles this violation by creating a fresh variable $z_1$.
    Then, it adds the relation $(z,z_1)$ to $\rho$,  
    replaces $z$ in $\pos{B}{3}$ by $z_1$ and adds the mapping  
    ${\{z_1 \mapsto \lnu_2\}}$ to $\psi$, lines~\ref{alg:x:fresh:begin}--\ref{alg:x:fresh:end}.   
    Algorithm~\ref{alg:frepair} then considers $\Sm_1(y, z, z)$.
    When $p = 1$, no violation is encountered, since ${\psi(y) = \xi_1(x_4)}$.
    However, when $p = 2$, a homomorphism violation is encountered, since
    $z$ is an exported variable and since ${\xi(x_3) = \lnu_2}$.
    Since ${(z,z_1) \in \rho}$ and   
    ${\psi(z_1) = \xi_1(x_5)}$, Algorithm~\ref{alg:frepair} replaces
    $z$ in the second position of $\Sm_1(y, z, z)$ by $z_1$, 
    line~\ref{alg:x:fresh:begin}. By applying a similar reasoning, 
    we can see that the variable $z$ siting in $\pos{\Sm_1(y, z, z)}{3}$ is also replaced by $z_1$. 
    Hence, the first repair of $\mu$ is 
    \begin{align}
        \R_1(x, y, z_1) \wedge \Sm_1(y, z_1, z_1) \rightarrow \T_1(x) \tag{$r_1$} \nonumber
    \end{align}   
    
    Algorithm~\ref{alg:frepair}, then proceeds by repairing $\mu_1$ based on $\xi_2$. 
    When ${B = \R_1(x, y, z)}$, Algorithm~\ref{alg:frepair} proceeds as described above and
    computes $\psi$ to $\{x \mapsto \ast, y \mapsto \lnu_1, z_1 \mapsto \lnu_2\}$. 
    When ${B = \Sm_1(y, z, z)}$ and $p=1$, then no violation is encountered since
    ${\psi(y) = \xi_1(x_4)}$, while when ${B = \Sm_1(y, z, z)}$ and $p=2$, there is a violation. 
    Since the condition in line~\ref{alg:x:prior:end} is not met, Algorithm~\ref{alg:frepair}
    creates a fresh variable $z_2$ and adds the mapping $\{z_2 \mapsto \lnu_3 \}$ to $\psi$.
    When ${B = \Sm_1(y, z, z)}$ and $p=3$, then no violation is met, since ${z \in \frontier{\mu}}$ and ${\xi_2(x_6) = \ast}$. 
    Hence, the second repair of $\mu_1$ is 
    \begin{align}
        \R_1(x, y, z_1) \wedge \Sm_1(y, z_2, z) \rightarrow \T_1(x,z) \tag{$r_2$} \nonumber
    \end{align}   
    
    Finally, we can see that the repair for $\mu_1$ w.r.t. $\xi_3$ is 
    \begin{align}
        \R_1(x, y, z_1) \wedge \Sm_1(y, z, z) \rightarrow \T_1(x,z) \tag{$r_3$} \nonumber
    \end{align}
\end{example}

\begin{restatable}{proposition}{thmfrepair}\label{proposition:frepair}
    For any ${\mappingsOriginal = (\source,\target,\Sigma)}$, 
    any ${\mappingsOriginal_\viewSchema = (\source,\viewSchema,\policyViews)}$
    and any preference function $\pref$, Algorithm $\fphase$ returns a mapping 
    ${\mappingsOriginal' = (\source,\target,\Sigma')}$ that is partially-safe w.r.t. 
    $\mappingsOriginal_\viewSchema$ on all instances of $\source$.
\end{restatable}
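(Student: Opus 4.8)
The plan is to reduce the global claim to a per-dependency one and then to read the required homomorphism off the run of Algorithm~\ref{alg:frepair}. By Proposition~\ref{proposition:potential-safety} it suffices to show that every dependency $\mu_r$ that ends up in the returned $\Sigma'$ admits a homomorphism from $\body(\mu_r)$ into $\visins{\policyViews}{\source}$ sending each variable in $\frontier{\mu_r}$ to the critical constant $\ast$; partial-safety then follows because this sufficient condition is checked one dependency at a time. Since the algorithm treats each $\mu \in \Sigma$ independently and replaces it either by a chosen repair $\mu_r \in \mathcal{R}_\mu$ or (when $\mathcal{R}_\mu = \emptyset$) leaves it untouched, the choice made by $\pref$ is irrelevant to correctness: I only need that each candidate produced in the inner loop already meets the criterion, with the substitution $\psi$ as the natural witness.

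Fix $\mu$ and a homomorphism $\xi : \mathcal{C} \to \visins{\policyViews}{\source}$, and let $\mu_r$ and $\psi$ be the dependency and substitution built from $\xi$. The heart of the argument is the following invariant, proved by induction on the sequence of positions $(B,p)$ processed by the nested loops in lines~\ref{alg:B:begin}--\ref{alg:p:end} (each position handled once): once a position has been handled, its current variable $x = \pos{B}{p}$ lies in $\dom{\psi}$ and satisfies $\psi(x) = \xi(\pos{C}{p})$, and $\psi$ is single-valued. I would verify this by case analysis on the branch taken. In the no-violation branch the equality is either recorded explicitly, or already held because the If-condition of line~\ref{alg:x:begin} being false together with $x \in \dom{\psi}$ forces $\psi(x) = \xi(\pos{C}{p})$; in either violation branch the offending variable is overwritten by a variable $x'$ — reused from $\rho$ exactly when $\psi(x') = \xi(\pos{C}{p})$, and fresh otherwise with $\{x' \mapsto \xi(\pos{C}{p})\}$ added — so the equality holds at that position and functionality of $\psi$ is preserved. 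Since every atom $C = R(\vec{y})$ of $\mathcal{C}$ satisfies $\xi(C) \in \visins{\policyViews}{\source}$, applying the invariant to all positions of a final body atom $B$ of $\mu_r$ gives $\psi(B) = \xi(C) \in \visins{\policyViews}{\source}$; hence $\psi$ restricted to $\body(\mu_r)$ is a homomorphism into $\visins{\policyViews}{\source}$.

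It remains to argue that $\psi$ maps every variable of $\frontier{\mu_r}$ to $\ast$. I would first observe that fresh variables are inserted only into body atoms and never into the head, so every variable of $\frontier{\mu_r}$ is an original $w \in \frontier{\mu}$ that still occurs in $\body(\mu_r)$. At any surviving body occurrence of $w$, no violation fired when it was processed (otherwise $w$ would have been renamed to a fresh variable), so the first disjunct of line~\ref{alg:x:begin} failed; since $w \in \frontier{\mu}$, this forces $\xi(\pos{C}{p}) = \ast$, and with the invariant we get $\psi(w) = \ast$. When instead $w$ sat at a position that $\xi$ sends to a non-critical value, the first disjunct fired and that occurrence was renamed away — and if all occurrences are renamed, $w$ is hidden from the head and is no longer exported — so the two behaviours are consistent. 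The degenerate case $\mathcal{R}_\mu = \emptyset$ with some $\xi$ present means no violation ever fired, so the very same $\psi$ witnesses that $\mu$ was already partially-safe and returning it unchanged is sound.

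The main obstacle I anticipate is twofold. First, the bookkeeping in the invariant: the two violation conditions (an exported variable forced off $\ast$, and an inconsistent repeated variable) interact through the shared $\rho$ and $\psi$, and I must ensure that breaking a join by a fresh rename never destroys the "mapped-to-$\ast$" property already secured at another occurrence of the same variable, and that reusing a variable from $\rho$ keeps $\psi$ single-valued — this is precisely why the guard $\psi(x') = \xi(\pos{C}{p})$ in line~\ref{alg:x:prior:begin} is essential. Second, a genuine subtlety is the existence of at least one $\xi : \mathcal{C} \to \visins{\policyViews}{\source}$: because the atoms of $\mathcal{C}$ carry pairwise-distinct fresh variables, such a $\xi$ exists exactly when every relation occurring in $\body(\mu)$ has some fact in $\visins{\policyViews}{\source}$, which holds whenever each source relation used by $\Sigma$ appears in some policy view. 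I would make this standing assumption explicit (noting that a body atom over a relation absent from the views cannot be made partially-safe by join-breaking and hiding alone), since otherwise $\mathcal{R}_\mu$ would be empty and $\mu$ would be returned unrepaired.
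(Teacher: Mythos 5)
Your proof follows essentially the same route as the paper's: it reduces the claim to the per-dependency criterion of Proposition~\ref{proposition:potential-safety} and then shows, by induction over the processed body atoms and positions, that $\psi$ is a homomorphism from the repaired body into $\visins{\policyViews}{\source}$ sending every surviving exported variable to $\ast$ --- this is exactly the paper's pair of claims $\phi$ and $\theta$. Your additional observation that at least one homomorphism $\xi:\mathcal{C}\to\visins{\policyViews}{\source}$ must exist for $\mathcal{R}_{\mu}$ to be non-empty (so that an irreparable $\mu$ over a relation absent from the views does not slip through unrepaired) is a genuine edge case that the paper's sketch leaves implicit, and making it a standing assumption as you propose is a sensible strengthening rather than a deviation in approach.
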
    
\begin{proof}
	(Sketch) From Proposition~\ref{proposition:potential-safety}, 
    	a mapping ${\mappingsOriginal = (\source,\target,\Sigma)}$ 
    	is partially-safe w.r.t.
    	${\mappingsOriginal_{\viewSchema} = (\source,\viewSchema,\policyViews)}$ 
    	on all instances of $\source$, if 
    	for each ${\mu \in \Sigma}$, there exists a homomorphism 
    	from $\body(\mu)$ into $\visins{\policyViews}{\source}$ mapping each 
    	${x \in \frontier{\mu}}$ into the critical constant $\ast$. 
    	Since for each ${\mu \in \Sigma}$ $\fphase$ computes a set of repaired tgds ${\mathcal{R}_{\mu}}$, 
    	it follows that Proposition~\ref{proposition:frepair} holds, if such a homomorphism 
    	exists, for each repaired tgd in ${\mathcal{R}_{\mu}}$. The proof proceeds as follows. Let 
    	$\mu_r^i$ and $\psi^i$ denote the repaired s-t tgd and the homomorphism $\psi$ computed at the end of each iteration $i$ 
	of the steps in lines~\ref{alg:B:begin}--\ref{alg:B:end} of Algorithm~\ref{alg:frepair}.
	Let also $B^i$ denote the $i$-th atom in ${\body(\mu_r)}$. 
	Since each ${C \in \mathcal{C}}$ is an atom of distinct fresh variables, 
	since $\xi$ is a homomorphism from $\mathcal{C}$ to $\visins{\policyViews}{\source}$
	and since ${\psi(B^i) = \pos{\mu_r}{i}}$, it follows that 
	in order to prove Proposition~\ref{proposition:potential-safety},
	we have to show that the following claim holds, for each ${i \geq 0}$:
	\begin{compactitem}
		\item $\phi$. $\psi^i$ is a homomorphism from the first $i$ atoms in the body of $\mu_r$ into $\visins{\policyViews}{\source}$
		mapping each exported variable occurring in ${B^0,\dots,B^i}$ into the critical constant $\ast$.  
	\end{compactitem}
	For ${i=0}$, $\phi$ trivially holds. For ${i+1}$ and assuming that $\phi$ 
	holds for $i$ let ${C^{i+1} = \nu(B^{i+1})}$, line~\ref{alg:C:extract}.  
	The proof of claim $\phi$ depends upon the proof of the following claim, for each iteration ${p \geq 0}$ of the steps in 
	lines~\ref{alg:p:begin}--\ref{alg:p:end}:
	\begin{compactitem}
		\item $\theta$. $\psi^{i+1}(\pos{B^{i+1}}{p}) = \xi(y)$, where ${y = \pos{C^{i+1}}{p}}$.  
	\end{compactitem}
	The claim $\theta$ trivially holds for ${p=0}$, while for ${p>0}$, it directly follows from the steps in lines~\ref{alg:x:begin}--\ref{alg:x:end}. 
	Since $\phi$ holds for $i$, since the steps in lines~\ref{alg:x:begin}--\ref{alg:x:end} do not modify the variable mappings in $\psi^i$
	and due to $\theta$, 	it follows that $\phi$ holds for ${i+1}$, concluding the proof of Proposition~\ref{proposition:potential-safety}.
\end{proof}

\subsection{Computing safe mappings}\label{sec:repair:safety}

Unifications of one or more labeled nulls occurring in $I_1$ 
with the critical constant $\ast$, might lead to unsafe instances.
Consider, for instance, a simplified variant of Example~\ref{example:running:intro}, where $\Sigma_{st}$ comprises only 
$\mu_e$ and $\mu_c$. Both $\mu_e$ and $\mu_c$ are partially-safe, as we have explained above. 
However, the unification of the labeled nulls $\lnu_{\name}$ and $\lnu_{\post}$ produces an unsafe instance. 
Algorithm~\ref{alg:sphase} aims at repairing the output of the previous step, such that no unsafe unification 
of a labeled null with $\ast$ takes place. 

Consider again the simplified variant of $\Sigma_{st}$ from above.  
Since $\Sigma_{st}$ is partially-safe, it suffices to look for homomorphism violations in $I_i$, for ${i \geq 1}$.  
A first observation is that the homomorphism violations are ``sitting" within the bags. 
This is due to the fact that each bag stores \emph{all} 
the facts associated with the bodies of one or more s-t tgds from $\Sigma_{st}$.
A second observation is that one way for preventing unsafe unifications is to hide exported variables. 
For example, let us focus on the unsafe unification of $\lnu_{\ethn}$ with $\ast$. 
This unification takes place due to $\epsilon_1$, 
which in turn has been created due to the fact that $\ethn$ is an exported variable in $\mu_e$. 
By hiding the exported variable $\ethn$ from $\mu_e$, 
we actually prevent the creation of $\epsilon_1$ and hence, 
we block the unsafe unification of $\ethn$ with $\ast$.     
Hiding exported variables is one way for preventing unsafe unifications with the critical constant. 
Another way for preventing unsafe unifications is to break joins in the bodies of the rules. 

\begin{example}\label{example:srepair:modifyBody}
    This example demonstrates a second approach for preventing unsafe labeled null unifications.
    
    Consider a set of policy views $\policyViews$ leading to the following instance
    ${\visins{\policyViews}{\source} = \{\R_1(\lnu_1,\lnu_1,\ast), \R_1(\ast,\ast,\lnu_2), \Sm_1(\ast)\}}$,
    where $\lnu_1$ and $\lnu_2$ are labelled nulls. 
    Consider also the mapping $\mappingsOriginal$ consisting of the following s-t dependencies:
    \begin{align}
        \R_1(x, x, y) \wedge \Sm_1(y) &\rightarrow \T_1(y) \tag{$\mu_2$} \\ 
        \R_1(x, x, y)                 &\rightarrow \T_2(x) \tag{$\mu_3$}    
    \end{align}  
    It is easy to see that $\mappingsOriginal$ is partially-safe, but unsafe in overall. Indeed, ${\visins{\Sigma}{\source}}$
    will consist of the following bags (for presentation purposes, we adopt the notation from Example~\ref{example:running:visChase}):
    \begin{align}
        \T_1(\ast)      &\xrightarrow{\langle \inverse{\mu_2}, \theta_1 \rangle}  \R_1(\lnu_3,\lnu_3, \ast), \Sm_1(\ast)                  \nonumber \\
        \T_2(\ast)      &\xrightarrow{\langle \inverse{\mu_3}, \theta_2 \rangle}  \R_1(\ast,\ast,\lnu_4)                                  \nonumber \\
        \R_1(\lnu_3,\lnu_3, \ast), \Sm_1(\ast)  &\xrightarrow{\langle \epsilon_3, \theta_3 \rangle} \R_1(\ast, \ast, \ast), \Sm_1(\ast)   \nonumber
    \end{align}
    where ${\epsilon_3 \defeq \R_1(x, x, y) \rightarrow x = \ast}$, 
    ${\theta_1 = \{y \mapsto \ast\}}$, ${\theta_2 = \{x \mapsto \ast\}}$ and 
    ${\theta_3 = \{x \mapsto \lnu_3, y \mapsto \ast\}}$.  
    Note that $\epsilon_3$ has been created out of $\mu_3$, 
    since there exists a homomorphism from $\body(\mu_3)$ into 
    ${\R_1(\lnu_3,\lnu_3, \ast)}$ mapping the exported variable $x$ into $\lnu_3$.  
    
    One approach for preventing the unsafe unification of $\lnu_3$ with $\ast$ is 
    to hide the exported variable $x$ from $\mu_3$. By doing this, we block 
    the creation of $\varepsilon$, and hence the unsafe unification. 
    
    A second approach is to keep $x$ as an exported variable in $\mu_3$, but  
    modify the body of $\mu_2$
    by breaking the join between the first and the second position of $\R_1$
    \begin{align}
        \R_1(x, z, y) \wedge \Sm_1(y) &\rightarrow \T_1(y) \tag{$\mu'_2$} \nonumber 
    \end{align}
    By doing this, we prevent the creation of $\varepsilon$, since the instance computed at line~\ref{bagchase:back}  
    of Algorithm~\ref{alg:bagchase} would consist of the facts $\R_1(\lnu_3,\lnu_5, \ast),$ $\R_1(\ast,\ast,\lnu_4),$ $\Sm_1(\ast)$ 
    and, hence, there would be no homomorphism from $\body(\mu_3)$
    into it. Note that the modification of $\mu_2$ to $\mu'_2$ is safe.
    Intuitively, this holds, since we break joins, and thus, we export less information.
\end{example}

Before presenting Algorithm~\ref{alg:sphase}, we will introduce some new notation. 
The \emph{depth} of each 
bag $\bag$, denoted as $\depth{\bag}$, coincides with the highest derivation depth 
of the facts in $\bag$.
The \emph{support} of a bag $\bag$, denoted as $\supp{\bag}$, is inductively defined as follows:
if ${\depth{\bag} = 1}$, then ${\supp{\bag} = \bag}$; otherwise, 
if ${\depth{\bag} > 1}$, then ${\cup_{\bag' \prec \bag} \supp{\bag'}}$.
Consider an active trigger $h$ for $\delta$ in $I$ leading to the creation of a bag $\bag$. We use the following notation:
${\dependency{\bag} = \delta}$,  ${\sub{\beta} = h}$ and ${\prem{\bag} = h(\body(\delta))}$. 
Two bags $\bag_1$ and $\bag_2$ are candidates for $\elimVars$ if 
${\bag_1 \prec \bag_2}$, ${\depth{\bag_1} = 1}$, ${\depth{\bag_2} = 2}$ 
and there exists at least one repeated variable in the body of $\origin{\bag_1}$.
                 
Algorithm~\ref{alg:sphase} presents an iterative process for repairing a partially-safe $\Sigma$, by 
employing the three ideas we described above: checking for homomorphism violations within each bag 
and preventing unsafe unifications either by hiding exported variable, or by modifying the bodies of the s-t tgds.  
In brief, at each iteration ${i \geq 0}$, the algorithm repairs one or more dependencies from $\Sigma_i$, 
where $\Sigma_0 = \Sigma$, and incrementally computes the visible chase of the new set of dependencies, 
lines~\ref{alg:sphase:i:start}--\ref{alg:sphase:i:end}. Algorithm~\ref{alg:sphase} terminates 
either when the dependencies are safe, or when the maximum number of iterations $n$ is reached, 
line~\ref{alg:sphase:i:end}, in which case it repairs all unsafe dependencies by hiding their 
exported variables. The algorithm starts by initializing $\Sigma_0$ to $\Sigma$, lines~\ref{alg:sphase:Sigma:init}.
Then, at each iteration $i$, it first identifies the lowest depth unsafe bag, 
line~\ref{alg:sphase:bag:init}, and attempts to repair the dependencies from $\Sigma_i$ that 
lead to its creation, lines~\ref{alg:sphase:bag:start}--\ref{alg:sphase:bag:end}. 
If ${i < n}$, it proposes two different repairs for
$\Sigma_i$, one based on hiding exported variables through $\hideF$ (Algorithm~\ref{alg:hideF}), and the second based on 
eliminating joins through $\elimVars$ (Algorithm~\ref{alg:elimVars}), lines~\ref{alg:sphase:smalli:start}--\ref{alg:sphase:smalli:end}.
Algorithm~\ref{alg:sphase} applies the $\elimVars$ if there exist two bags 
in the support of $\bag$ that are candidates for $\elimVars$.
Informally, Algorithm~\ref{alg:sphase} tries to apply $\elimVars$ as early as possible
(condition ${\depth{\bag_1} = 1}$, ${\depth{\bag_2} = 2}$) and when there are one
or more repeated variables in the body of $\origin{\bag_1}$ 
(recall Example~\ref{example:srepair:modifyBody}).
Otherwise, if ${i = n}$, it either applies the function $\hideF$, or it eliminates the s-t tgds 
that are responsible for unsafe unifications. 
     
\begin{algorithm}[tb!]
\scriptsize
\caption{$\sphase(\Sigma,\policyViews,\pref,n)$}\label{alg:sphase}
\begin{algorithmic}[1]
    \State $\Sigma_0 \defeq \Sigma$                                                                                             \label{alg:sphase:Sigma:init} 
    \State $B_0 \defeq \vchase{\Sigma}{\source}$                                                                         \label{alg:sphase:prov:init} 
    \State $i \defeq 0$
    \Do                                                                                                                         \label{alg:sphase:i:start}  
        \State $\Sigma_{i+1} \defeq \Sigma_i$
        \State $cont \defeq \mathbf{false}$ 
        \If{$\exists$ unsafe ${\bag \in B_i}$, s.t. $\depth{\bag} \leq \depth{\bag'}$, $\forall$ unsafe bag ${\bag' \in B_i}$}  \label{alg:sphase:bag:init}                                                                                           \label{alg:sphase:bag:start}
           \State $cont \defeq \mathbf{true}$           
            \If{$i < n$}                                                                                                        \label{alg:sphase:smalli:start}
                 \State ${r_1 \defeq \emptyset}$; $r_2 \defeq \hideF(\bag,\policyViews,\pref)$                                                                                                     \label{alg:sphase:smalli:start}      
                 \If{${\exists \bag_1, \bag_2 \in \supp{\bag}}$, s.t. $\bag_1, \bag_2$ are candidates for $\elimVars$}
                     \State $r_1 \defeq \elimVars(\origin{\beta_1},\origin{\bag_2},\pref)$
                 \EndIf                 
                 \If{$r_1 \neq \emptyset$ and it is preferred over $r_2$ w.r.t. $\pref$} 
                     \State \textbf{remove} $\origin{\bag_1}$ from $\Sigma_{i+1}$ 
                     \State \textbf{add} $r_1$ to $\Sigma_{i+1}$
                 \Else
                     \State \textbf{remove} $\origin{\bag}$ from $\Sigma_{i+1}$ 
                     \State \textbf{add} $r_2$ to $\Sigma_{i+1}$
                 \EndIf
             \Else                                                                                                              \label{alg:sphase:smalli:end}       
                 \If{$\not \exists \bag'$, s.t., $\bag \prec \bag' \in B_i$}                                                    \label{alg:sphase:bigi:start}  
                     \State \textbf{add} $\hideF(\bag,\policyViews,\pref)$ to $\Sigma_{i+1}$
                 \Else \textbf{ remove} $\origin{\bag}$ from $\Sigma_{i+1}$
                 \EndIf                                                                                                         \label{alg:sphase:bigi:end}
             \EndIf
        \EndIf                                                                                                                  \label{alg:sphase:bag:end}
        \State \textbf{compute} $J_{i+1}$ from $\Sigma_i$, $\Sigma_{i+1}$ and $B_{i}$                                           \label{alg:sphase:incremental}
        \State $i = i + 1$
    \doWhile{$cont$ \textbf{and} $i \leq n$}                                                                                    \label{alg:sphase:i:end}
    \State \textbf{return} $\Sigma_{n}$
\end{algorithmic}
\end{algorithm}

\begin{algorithm}[tb!]
\scriptsize
\caption{$\hideF(\bag,\policyViews,\pref)$}\label{alg:hideF}
\begin{algorithmic}[1]
    \State $J := \prem{\bag}$																			\label{alg:hideF:J:begin}
    \State $\nu := \emptyset$
    \For{\textbf{each} $n \in \nulls$ occurring into $J$}                                                            			\label{alg:hideF:nu:begin}
      \State \textbf{add} $\{n \mapsto x\}$ to $\nu$, where $x$ is a fresh variable
    \EndFor                                                                                                          						\label{alg:hideF:nu:end}
  
    \State $\mathcal{R} := \emptyset$
    \For{\textbf{each} $\xi : \nu(J) \rightarrow \visins{\policyViews}{\source}$}                                    	\label{alg:hideF:xi:begin}
      \State $\mu \defeq \origin{\bag}$
      \For{\textbf{each} $x \in \dom{\xi}$}                                                                         			 	\label{alg:hideF:x:begin}
          \If{$\xi(x) \neq *$}
                \For{\textbf{each} $y \in \frontier{\mu}$}
                      \If{$\tau(y) = \nu^{-1}(x)$, where ${\tau = \sub{\bag}}$}								\label{alg:hideF:x:violation}
                            \State \textbf{remove} $y$ from $\frontier{\mu}$
                      \EndIf
                \EndFor
          \EndIf
      \EndFor                                                                                                        						\label{alg:hideF:x:end}
      
      \If{$\mu \neq \origin{\bag}$}
        \State \textbf{add} $\mu$ to $\mathcal{R}$
      \EndIf
      
    \EndFor
    \State \textbf{choose} the best repair $\mu_r$ of $\mu$ from $\mathcal{R}$ based on $\pref$
    \State \textbf{return} $\mu_r$
    
\end{algorithmic}
\end{algorithm}

\begin{algorithm}[tb!]
\scriptsize
\caption{$\elimVars(\mu_1,\mu_2,\pref)$}\label{alg:elimVars}
\begin{algorithmic}[1]
    \State $\mathcal{R} := \emptyset$
    \If{$\exists$ one or more repeated variables in $\body(\mu_1)$}  
        \For{\textbf{each} ${\xi:\body(\mu_2) \rightarrow \body(\mu_1)}$
            mapping some ${x_1 \in \frontier{\mu_1}}$  
            \Statex \hspace{1.25cm} into some ${x_2 \not\in \frontier{\mu_2}}$}
            \State Let ${B \subseteq \body(\mu_1)}$, s.t. ${\xi(\body(\mu_2)) = B}$
            \State Let $V$ be the set of repeated variables from $B$
            \State Let $P$ be the set of positions from $B$, where all variables from $V$ occur
            \For{\textbf{each} non-empty ${S \subset P}$}
                \State ${\mu \defeq \mu_1}$ 
                \State \textbf{replace} the variables in positions $S$ of $\mu$ by fresh variables 
                \State \textbf{add} $\mu$ to $\mathcal{R}$
            \EndFor 
        \EndFor
    \EndIf
    \State \textbf{choose} the best repair $\mu_r$ of $\mu$ from $\mathcal{R}$ based on $\pref$
    \State \textbf{return} $\mu_r$
\end{algorithmic}
\end{algorithm}

\begin{example}
    We demonstrate Algorithm~\ref{alg:sphase} 
    over a simplified version of the running example, where $\Sigma'_{st} = \{\mu_e, \mu_c\}$.
    It is see that $\vchase{\Sigma'_{st}}{\source}$ will consist of the bags 
    $\{\bag_2,\bag_3,\bag_4,\bag_5\}$. We assume that ${n=\infty}$.
    During the first iteration of Algorithm~\ref{alg:sphase}, 
    the lowest depth bag for which there exists a homomorphism violation is $\bag_4$. 
    Since ${i < n}$, the algorithm tries to repair $\Sigma'_{st}$ by calling $\hideF$ and $\elimVars$
    with arguments (apart from $\policyViews$ and $\pref$) $\bag_4$ and ${\bag_2, \bag_4}$, respectively.  
    
    Algorithm~\ref{alg:hideF} first computes ${\nu = \{\lnu'_{\pid} \mapsto x_1, \lnu'_{\name} \mapsto x_2, \lnu_{\ethn} \mapsto x_3\}}$, 
    lines \ref{alg:hideF:nu:begin}--\ref{alg:hideF:nu:end}, and then computes all homomorphisms
    from $${\nu(J) = \{\pat(x_1,x_2,x_3,\ast), \hospn(x_1,\ast)\} }$$ 
    into the instance $\visins{\policyViews}{\source}$, 
    line~\ref{alg:hideF:xi:begin}. We can see that there exists only one such 
    homomorphism ${\xi=\{ x_1 \mapsto \lnu'_{\pid}, x_2 \mapsto \lnu'_{\name}, x_3 \mapsto \lnu_{\ethn}\}}$. 
    We have ${\origin{\bag_4} = \mu_e}$. The first two iterations of the loop in 
    lines~\ref{alg:hideF:x:begin}--\ref{alg:hideF:x:end} have no effect, since 
    despite that ${\xi(x_1) = \lnu'_{\pid}}$ and ${\xi(x_2) = \lnu'_{\name}}$, 
    the variables $\pid$ and $\name$ from $\mu_e$ that are mapped to $\lnu'_{\pid}$ and $\lnu'_{\name}$ via ${\sub{\bag_4} = h_4}$
    are not exported ones. During the last iteration, since ${\xi(x_3) = \lnu_{\ethn}}$,
    since ${h_4(\ethn) = \lnu_{\ethn}}$ and since $\ethn$ is an exported variable,    
    Algorithm~\ref{alg:hideF} removes variable $\ethn$ from the exported variables of $\mu_e$ and returns $\mu'_e$
    \begin{align}
        \pat(\pid,\name,\ethn,\post) \wedge \hospn(\pid,\dis)     &\leftrightarrow \ethreason(\dis)           \tag{$\mu'_e$} \nonumber
    \end{align}
    Algorithm~\ref{alg:sphase} then calls $\elimVars$. The function does not return any repair, 
    since there does not exist any variable repetition in the body of $\mu_e$. 
    Hence, Algorithm~\ref{alg:sphase} computes ${\Sigma_1 = \{\mu'_e,\mu_c\}}$ and proceeds in the next iteration.
    The instance $\visins{\Sigma_1}{\source}$ will consist of the following bags $\bag'_2$ and $\bag'_3$ with their corresponding homomorphisms shown below:
    \begin{align}
        \countyreason(\post,\dis) &\xrightarrow{{\langle \inverse{\mu_c}, h'_2 \rangle}} \pat(\lnu'_{\pid},\lnu'_{\name},\lnu_{\ethn},\ast), \hospn(\lnu'_{\pid},\ast)                                      &\nonumber\\
        \ethreason(\dis)    &\xrightarrow{{\langle \inverse{\mu_e}, h'_3 \rangle}} \pat(\lnu_{\pid},\lnu_{\name},\lnu'_{\ethn},\lnu_{\post}), \hospn(\lnu_{\pid},\ast)                                         &\nonumber
    \end{align}
    \begin{align}
        h'_2 &= \{\post \mapsto \ast, \dis \mapsto \ast \} \nonumber \\
        h'_3 &= \{\dis \mapsto \ast \} \nonumber 
    \end{align} 
    Algorithm~\ref{alg:sphase} terminates, since all bags are safe.
\end{example}

Note that when we reach the maximum number of iterations we do not apply $\elimVars$. 
This is due to the fact that $\elimVars$ might lead to unsafe unification of labeled 
nulls to $\ast$ that were not taking place before the modifying the s-t tgd through $\elimVars$. 
In contrast, $\hideF$ is a safe modification, since it does not lead to new unsafe unifications. 

\begin{restatable}{theorem}{thmsrepair}\label{theorem:srepair}
    For any partially-safe ${\mappingsOriginal = (\source,\target,\Sigma)}$, 
    any ${\mappingsOriginal_\viewSchema = (\source,\viewSchema,\policyViews)}$, 
    any preference function $\pref$ and ${n \geq 0}$, 
    Algorithm $\sphase$ returns a mapping ${\mappingsOriginal' = (\source,\target,\Sigma')}$ 
    that \emph{preserves the privacy of  
    ${\mappingsOriginal_\viewSchema}$ on all instances of $\source$}.
\end{restatable}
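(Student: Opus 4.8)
The plan is to reduce the claim to the homomorphism criterion of Theorem~\ref{theorem:preservation_under_kernel_property}: it suffices to exhibit, for the returned $\Sigma'$, a $\ast$-preserving homomorphism from $\visins{\Sigma'}{\source}$ into $\visins{\policyViews}{\source}$, i.e.\ to show that $\visins{\Sigma'}{\source}$ is safe. First I would record the invariant that every repair performed inside the loop of Algorithm~\ref{alg:sphase} only hides exported variables (via $\hideF$) or breaks joins in a body (via $\elimVars$), and that both operations merely \emph{weaken} the dependencies. Hence partial-safety, which the input enjoys by hypothesis and which is characterised by Proposition~\ref{proposition:potential-safety}, is preserved at every iteration; consequently the depth-$1$ bags produced at line~\ref{bagchase:back} of Algorithm~\ref{alg:bagchase} remain safe, and any surviving violation must sit in a bag of depth $\geq 2$, i.e.\ in a bag created by the egd phase.

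The core ingredient is a \emph{localisation lemma}: $\visins{\Sigma'}{\source}$ is safe iff every bag of $\vchase{\Sigma'}{\source}$ is safe, where a bag is safe when it admits a $\ast$-preserving homomorphism into $\visins{\policyViews}{\source}$. The direction I need for correctness is that per-bag safety implies global safety, and I would prove it by exploiting the derivation structure carried by $\prec$ and $\supp{\cdot}$: each labelled null is introduced in a unique bag, and a deeper bag arises from its predecessors through a unification $\nu$ that either equates a null with $\ast$ or merges two nulls. Processing the bags in order of increasing depth, I would glue the individual witnesses into a single global $\ast$-preserving homomorphism, using the fact that the only terms shared between bags are propagated along $\prec$ together with their $\nu$-images, so the local choices can be made mutually consistent. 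This lemma is precisely what validates the termination test at line~\ref{alg:sphase:i:end}: once no unsafe bag remains, the whole instance is safe.

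I would then analyse the two exits of the loop. If it exits because no unsafe bag remains, the localisation lemma gives safety of $\Sigma'$ immediately. If instead the budget $i=n$ is reached, the terminal iteration repairs every remaining unsafe bag using only $\hideF$ or outright removal of $\origin{\bag}$. Here the decisive property is the \emph{monotonicity of $\hideF$}: applied to an unsafe bag $\bag$, it abstracts the nulls of $\prem{\bag}$ and drops from $\frontier{\origin{\bag}}$ exactly those exported variables whose value under $\sub{\bag}$ is a null that no homomorphism into $\visins{\policyViews}{\source}$ can send to $\ast$ (line~\ref{alg:hideF:x:violation}); afterwards the derived egd of Definition~\ref{definition:derived-egd} that forced those nulls to unify with $\ast$ is no longer generated, so the offending unification vanishes. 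Since hiding an exported variable turns occurrences of $\ast$ into fresh nulls and suppresses a derived egd, the repaired visible chase admits a $\ast$-preserving homomorphism into the one before the repair, so no new $\ast$-unification, hence no new unsafe bag, can appear. The extreme case makes this transparent: hiding all exported variables kills every derived egd, collapsing the instance to the safe $I_1$; the targeted hiding is merely an efficient refinement whose correctness rests on making each unsafe bag safe while, by monotonicity, introducing nothing new. Thus after the terminal iteration every bag is safe, and the localisation lemma again yields safety of $\Sigma'$, whence Theorem~\ref{theorem:preservation_under_kernel_property} delivers privacy preservation for every $n \geq 0$.

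I expect the main obstacle to be the localisation lemma, and specifically its gluing step: assembling a family of per-bag $\ast$-preserving homomorphisms into one global homomorphism requires showing that the witnesses agree on every labelled null shared between bags, which is where the bookkeeping of predecessors $\prec$, supports $\supp{\cdot}$ and the egd-phase substitutions $\nu$ must be used with care, since a naive union of the local maps need not be well defined. A secondary subtlety is that, unlike $\hideF$, the operation $\elimVars$ may introduce fresh violations; correctness must therefore not rely on $\elimVars$ making progress but only on the monotone $\hideF$/removal fallback executed at the terminal iteration, with the budget $n$ serving purely as an allowance for attempting the preference-preferred $\elimVars$ repairs while $i<n$.
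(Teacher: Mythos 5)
Your proposal is correct and follows essentially the same route as the paper's own (sketched) argument: reduce to the homomorphism criterion of Theorem~\ref{theorem:preservation_under_kernel_property}, observe that partial safety (Proposition~\ref{proposition:potential-safety}) keeps the tgd-phase instance $I_1$ safe so that the only danger is the unification of labelled nulls with $\ast$ in the egd phase, and argue that $\hideF$ removes exactly the exported variables that generate the offending derived egds while $\elimVars$ is never relied upon for correctness. The one point you make explicit that the paper only asserts informally (``the homomorphism violations are sitting within the bags'') is your localisation lemma and its gluing step; you are right that this is where the remaining work lies, and flagging it is a strength rather than a deviation.
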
 
\begin{proof}(Sketch)
	First note that since $\sphase$ takes as input a partially-safe mapping 
	${\mappingsOriginal = (\source,\target,\Sigma)}$, 
	it follows from Definition~\ref{definition:potential-safety} that 
	there exists a homomorphism from
      ${\chase{\inverse{\Sigma}}{\critical{\target}} \setminus \critical{\target}}$ 
      into $\visins{\policyViews}{\source}$. 
      Furthermore, from Proposition~\ref{proposition:potential-safety}, we know that 
      for each ${\mu \in \Sigma}$, there exists a homomorphism 
      from $\body(\mu)$ into $\visins{\policyViews}{\source}$ mapping each 
      ${x \in \frontier{\mu}}$ into the critical constant $\ast$. 
      Due to the above, 
      since the steps in lines~\ref{bagchase:EGDsChase:init}--\ref{bagchase:EGDsChase:create} 
      of Algorithm~\ref{alg:bagchase}
      do not introduce new labeled nulls
      and since $\sphase$ applies the procedure $\hideF$ to each unsafe bag $\bag$   
	in $B_n$, if there does not exist a bag ${\bag' \in B_n}$, such that  
	${\bag \prec \bag'}$, it follows that $\mappingsOriginal'$ preserves the privacy of  
      ${\mappingsOriginal_\viewSchema}$ on all instances of $\source$, if 
      $\hideF$ prevents dangerous unifications of labeled nulls with the critical constant 
      in line~\ref{bagchase:sigma_approx:end} of Algorithm~\ref{alg:bagchase}. 
      In particular, assume that we are in the $n$-th iteration of the steps in lines~ 
      \ref{alg:sphase:i:start}--\ref{alg:sphase:i:end} of Algorithm~\ref{alg:sphase}.
      Let ${\bag_{n}^0,\dots,\bag_{n}^M}$ be the unsafe bags in $B_n$.
      Assume also that for each ${1 \leq l \leq M}$, $\bag_{n}^l$, was derived due to some 
      active trigger $h^l$, for some derived egd ${\varepsilon^l \in \Sigma_{\approx}}$ in $I_j$,
      where ${j \geq 0}$, line~\ref{bagchase:EGDsChase:homom} of Algorithm~\ref{alg:bagchase}. 
      Let ${\mu^l = \origin{\varepsilon^l}}$, for each ${0 \leq l \leq M}$ and let $\mu_r^l$ be the repaired s-t tgd.
      Finally, let ${\bag_{n+1}^0,\dots,\bag_{n+1}^N}$ be the bags in $B_{n+1}$, line~\ref{alg:sphase:incremental}
      of Algorithm~\ref{alg:sphase}. 
      Based on the above, in order to show that Theorem~\ref{theorem:srepair} holds, 
      we need to show that (i) the number of bags in $B_{n+1}$ is $\leq$ the number of bags in 
      $B_{n}$ and that (ii) the s-t tgds in ${\left( \Sigma \setminus \bigcup \nolimits_{l=0}^{M} \mu^l \right) \cup \bigcup \nolimits_{l=0}^{M} \mu_r^l}$ are safe. 
      In order to show (i) and (ii), we consider the steps in Algorithm~\ref{alg:hideF}: for each ${1 \leq l \leq M}$,
      each exported variable $y$ occurring in $\mu^l$, which leads to an unsafe unification, line~\ref{alg:hideF:x:violation} of Algorithm~\ref{alg:hideF}, is turned into a non-exported variable. 
\end{proof}

By combining Proposition~\ref{proposition:frepair} and Theorem~\ref{theorem:srepair} we can prove the correctness of 
Algorithm~\ref{alg:repair}. Furthermore, if the preference function always prefers the repairs computed by $\hideF$
from the repairs computed by $\elimVars$, we can show the following: 
\begin{proposition}\label{proposition:completeness}
    For each mapping ${\mappingsOriginal = (\source,\target,\Sigma)}$, 
    each ${\mappingsOriginal_{\viewSchema} = (\source,\viewSchema,\policyViews)}$ and
    each preference function $\pref$ that always prefers the repairs computed by $\hideF$
    from the repairs computed by $\elimVars$, 
    Algorithm~\ref{alg:repair} returns a non-empty mapping that is safe w.r.t. $\mappingsOriginal_{\viewSchema}$, 
    if such a mapping exists. 
\end{proposition}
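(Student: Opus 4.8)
The plan is to split the statement into a \emph{soundness} part---the returned mapping is safe w.r.t.\ $\mappingsOriginal_{\viewSchema}$---and a \emph{completeness} part---it is non-empty whenever some non-empty safe mapping exists. Soundness is immediate from the results already in hand: by Proposition~\ref{proposition:frepair}, $\fphase$ turns $\Sigma$ into a partially-safe $\Sigma_1$, and by Theorem~\ref{theorem:srepair} the stage $\sphase$ maps a partially-safe input to a $\Sigma_2$ that preserves the privacy of $\mappingsOriginal_{\viewSchema}$, i.e.\ is safe. Since Algorithm~\ref{alg:repair} merely composes these two stages, its output is safe for \emph{every} $\pref$ and every $n$. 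Thus the entire difficulty resides in the non-emptiness claim, and this is precisely where the hypothesis on $\pref$ (always preferring the repairs of $\hideF$ over those of $\elimVars$) is used.

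The central lemma I would establish is that \emph{hiding alone is a complete repair strategy}. Let $\Sigma^{\ast}$ be obtained from the partially-safe output $\Sigma_1$ of $\fphase$ by moving \emph{every} exported variable of \emph{every} tgd out of its frontier. I claim $\Sigma^{\ast}$ is safe and non-empty. Non-emptiness is clear, since hiding exported variables deletes neither a tgd nor a body atom, so each $\mu \in \Sigma_1$ survives in $\Sigma^{\ast}$ with $\body(\mu)$ intact and $\frontier{\mu}$ emptied. For safety, note that a tgd with empty frontier can never trigger a derived egd: by Definition~\ref{definition:derived-egd} an egd is derived only when some \emph{exported} variable is sent to a labelled null. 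Hence the set $\Sigma_{\approx}$ of derived egds is empty for $\Sigma^{\ast}$, so line~\ref{bagchase:sigma_approx:end} of Algorithm~\ref{alg:bagchase} performs no unification and $\visins{\Sigma^{\ast}}{\source}$ coincides with the instance $I_1$ produced at line~\ref{bagchase:back}. That $\Sigma^{\ast}$ remains partially-safe follows from Proposition~\ref{proposition:potential-safety}: hiding leaves each $\body(\mu)$ unchanged and empties $\frontier{\mu}$, so the sufficient condition degenerates to ``$\body(\mu)$ maps into $\visins{\policyViews}{\source}$'', which already held for $\Sigma_1$. Consequently $I_1$ homomorphically maps into $\visins{\policyViews}{\source}$; as $\ast$ is the only constant occurring in these instances, this map fixes $\ast$, and $\visins{\Sigma^{\ast}}{\source}$ is safe by Theorem~\ref{theorem:preservation_under_kernel_property}.

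Next I would show that the $\hideF$-preferring run of $\sphase$ realises a sub-sequence of this hiding strategy and therefore never deletes a tgd. At each iteration $i<n$, $\sphase$ selects the lowest-depth unsafe bag $\bag$ and, because $\hideF$ is preferred over $\elimVars$, commits the repair $\hideF(\bag,\policyViews,\pref)$, which by Algorithm~\ref{alg:hideF} only moves exported variables of $\origin{\bag}$ out of $\frontier{\origin{\bag}}$ and---as noted after Algorithm~\ref{alg:sphase}---introduces no new unsafe unifications. The remaining step is a well-foundedness argument: each such call strictly decreases the finite multiset of exported variables of $\Sigma_i$ that some derived egd can send to a null, while never increasing it, so after finitely many iterations no unsafe bag remains, the extreme fixpoint being exactly $\Sigma^{\ast}$. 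Hence the removal branch of $\sphase$ (the $i=n$ case with a successor bag $\bag \prec \bag'$) is never exercised on this run, and the returned $\Sigma_2$ keeps a non-empty tgd for every tgd of $\Sigma_1$; combined with the first two paragraphs, $\Sigma_2$ is a non-empty safe mapping. Conversely, when no non-empty safe mapping exists, $\fphase$ cannot attain partial safety in the first place (some body relation is absent from $\visins{\policyViews}{\source}$), so the conclusion holds vacuously.

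The main obstacle I expect is exactly the well-foundedness/monotonicity step of the third paragraph: one must prove that greedily hiding the exported variables blamed for the \emph{lowest-depth} unsafe bag strictly reduces a suitable termination measure and, crucially, cannot resurrect violations at the same or greater depth, so that the one-bag-at-a-time process of $\sphase$ genuinely converges to $\Sigma^{\ast}$ rather than stalling at $i=n$ on an unsafe bag that still has a successor. Making this precise requires tracking how each $\hideF$ step eliminates the derived egd that produced $\bag$ (Definition~\ref{definition:derived-egd}) and arguing, via the support $\supp{\bag}$ and the predecessor order $\prec$, that no new derived egd of equal or smaller depth can appear---the same invariant that underlies the proof of Theorem~\ref{theorem:srepair}.
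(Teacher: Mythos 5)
Your proposal is correct in its two load-bearing ideas, and these are exactly the ideas the paper's (very terse) sketch relies on: (i) non-emptiness comes from the fact that a ``fully defused'' rewriting always survives --- the paper phrases this as $\fphase$ always considering the rewriting with no exported variables and no repeated body variables, which is partially-safe whenever the body predicates occur in $\visins{\policyViews}{\source}$, while you phrase it as the post-hoc mapping $\Sigma^{\ast}$ obtained by emptying every frontier of $\Sigma_1$; and (ii) full safety of such a mapping follows because an empty frontier blocks every derived egd of Definition~\ref{definition:derived-egd}, so $\Sigma_{\approx}=\emptyset$, $\visins{\Sigma^{\ast}}{\source}$ collapses to the instance $I_1$, and partial safety already gives the required $\ast$-preserving homomorphism into $\visins{\policyViews}{\source}$. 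Your version is more explicit than the paper's about \emph{why} hiding converts partial safety into safety, which is a genuine improvement over the sketch.

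The one place where you go beyond the paper --- and where your argument is not yet closed --- is the dynamic claim in your third paragraph that the $\hideF$-preferring run of $\sphase$ converges to (a sub-hiding of) $\Sigma^{\ast}$ \emph{without ever exercising the tgd-removal branch} of lines~\ref{alg:sphase:bigi:start}--\ref{alg:sphase:bigi:end}. You flag this yourself as the main obstacle, and rightly so: for small $n$ (e.g., $n=0$) the algorithm reaches the $i=n$ branch immediately, and an unsafe bag that has a successor under $\prec$ causes its originating tgd to be \emph{removed}, not hidden; your well-foundedness measure does not by itself rule out that every surviving tgd is eventually removed this way. What rescues the statement is not that removal never fires, but that removal and hiding are both safety-preserving and that at least one tgd always survives (every $\prec$-maximal unsafe bag is hidden rather than removed, and a removed tgd cannot take the whole mapping with it when a safe non-empty repair exists). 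The paper's own proof silently assumes this as well --- it asserts that ``a partially-safe mapping can be transformed into a safe one by means of $\hideF$'' without arguing that Algorithm~\ref{alg:sphase} actually performs that transformation --- so you have correctly located the weakest link of the published argument rather than introduced a new one; but as written, your proof of the Proposition is incomplete at exactly that step.
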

\begin{proof}(Sketch)
	From Algorithm~\ref{alg:frepair}, we can see that $\fphase$ always computes a non-empty partially-safe mapping, if such a mapping exists. 
	Note that a mapping, where no variable is exported and no repeated variables occur in the body of the s-t tgds is always partially-safe
	as long as, the predicates in the bodies of the s-t tgds are the same with the ones occurring in the policy views. 
	Please also note that such a mapping is always considered by $\fphase$.
	The above argument, along with the fact that a partially-safe mapping can be transformed 
	into a safe one by turning
	exported variables into non-exported ones by means of the function $\hideF$,
	show that Proposition~\ref{proposition:completeness} holds. 
\end{proof}


\section{Experiments} \label{sec:experiments}

We investigate 
the efficiency of our repairing algorithm with the use of hard-coded preference function and with a preference function based on a learning approach.
The source code and the experimental scenarios are publicly available at
\url{https://github.com/ucomignani/MapRepair.git}.

\begin{table}[tb]
\centering
 \begin{tabular}{l|c|c|c}
        & min & max & step\\
 \hline
  \# s-t tgds per scenario ($n_{dep}$) & 100 & 300 & 50\\
  \# body atom per s-t tgds ($n_{atoms}$) & 1 & 3 (5) & $-$ \\
  \# exported variables per s-t tgds  ($n_{vars}$) & 5 & 8 & $-$ \\
 \hline
\end{tabular}
\caption{Properties of the generated iBench scenarios.}
\label{tab:iBench-conf}
\end{table}

We evaluated our algorithm using a set of 3,600 scenarios 
with each scenario consisting of a set of policy views and a set of s-t tgds. 
The source schemas and the policy views have been synthetically
generated using {\tt iBench}, the state-of-the-art data integration
benchmark \cite{arocena2015ibench}.
We considered relations of up to five attributes and we created GAV mappings 
using the {\tt iBench} configuration recommended by \cite{arocena2015ibench}. 
We generated policy views by applying the {\tt iBench} operators 
copy, merging, deletion of attributes and self-join ten times each.
The characteristics of the scenarios are summarized in 
Table~\ref{tab:iBench-conf}. In each scenario, we used a different number of s-t tgds $n_{dep}$, 
a different number of body atoms $n_{atoms}$ and
a different number of exported variables $n_{vars}$. 

We implemented our algorithm in {\tt Java} and we used the {\tt Weka} library \cite{eibe2016weka}
that provides an off-the-shelf implementation of the k-NN algorithm\footnote{\url{https://github.com/ucomignani/MapRepair/blob/master/learning.md}}.
We ran our experiments on a laptop with one 2.6GHz 
2-core processor, 16Gb of RAM, running Debian 9.

In the remainder, all data points have been computed as an average on five runs preceded by one discarded cold run. 

\subsection{Running time of $\repair$}\label{subsec:execution-time}
First, we study the impact of the number of s-t tgds 
and of the number of body atoms on the running time of $\repair$.
We adopt a fixed preference function that chooses the repair with the maximum number of exported variables,
while, in case of ties, it chooses the repair with the maximum number of joins.
We range the number of s-t tgds from $100$ to $300$ by steps of $50$
and the number of body atoms from three to five.
The results are shown in Figure~\eqref{fig-execTimes-low}.
Figure~\eqref{fig-execTimes-low} shows that the performance 
of our algorithm is pretty high; the median repairing time is less than $1.5$s,
while for the most complex scenario containing up to five body atoms per s-t tgd,   
the median running time is less than $8$s with $71$s being the maximum. 

\begin{figure*}
    \captionsetup[subfigure]{justification=centering}
    \centering
    \begin{subfigure}[b]{0.48\textwidth}
        \centering
        \includegraphics[width=\textwidth]{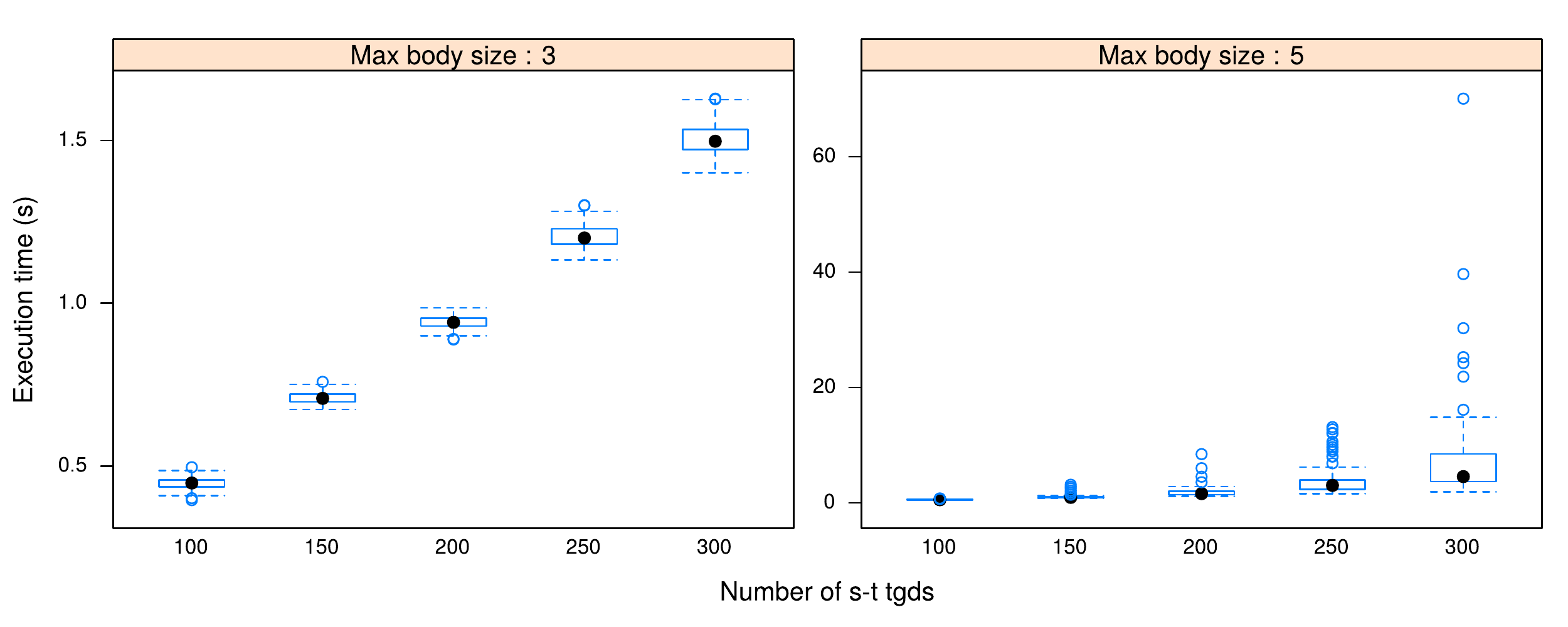}
	\caption{Repairing times.}
	\label{fig-execTimes-low}
    \end{subfigure}
    \quad
    \begin{subfigure}[b]{0.48\textwidth}
        \centering
        \includegraphics[width=\textwidth]{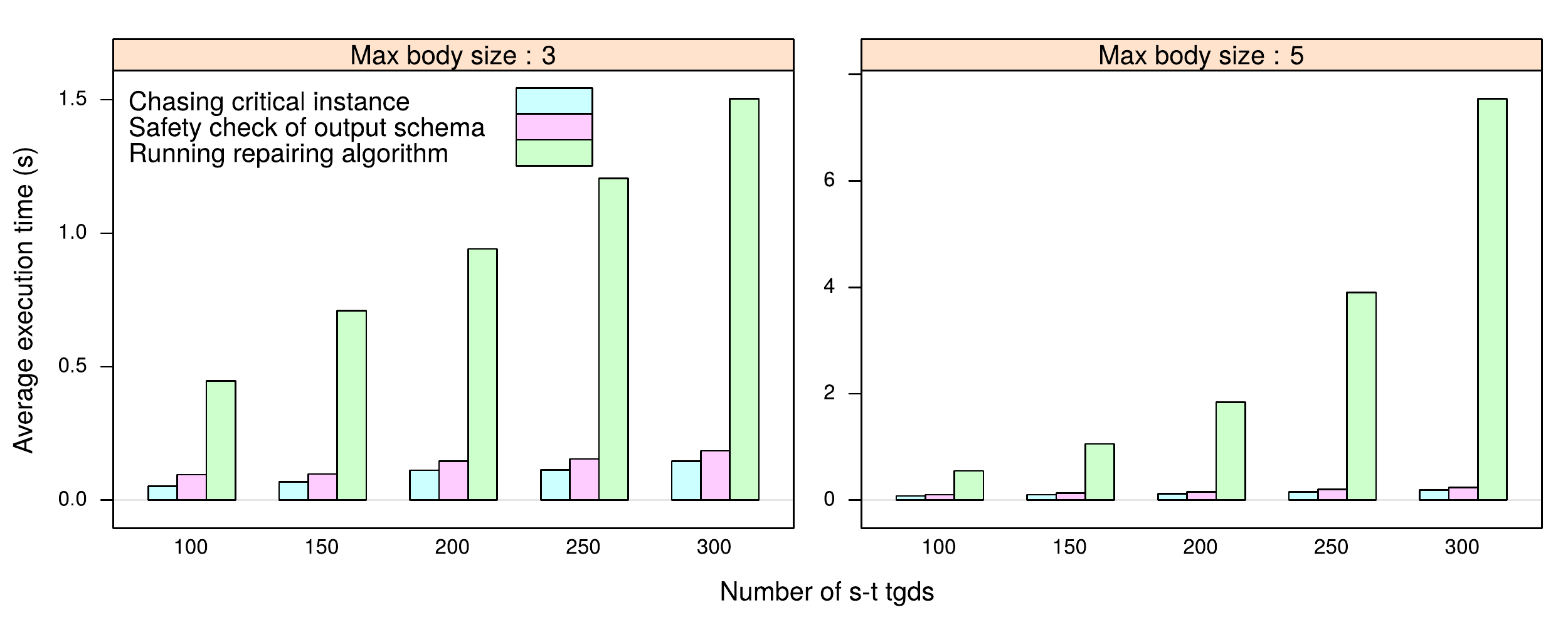}
	\caption{Time comparisons.}
	\label{fig-comparison}
    \end{subfigure}
    
    \begin{subfigure}[b]{0.48\textwidth}
	\centering
	\includegraphics[width=\textwidth]{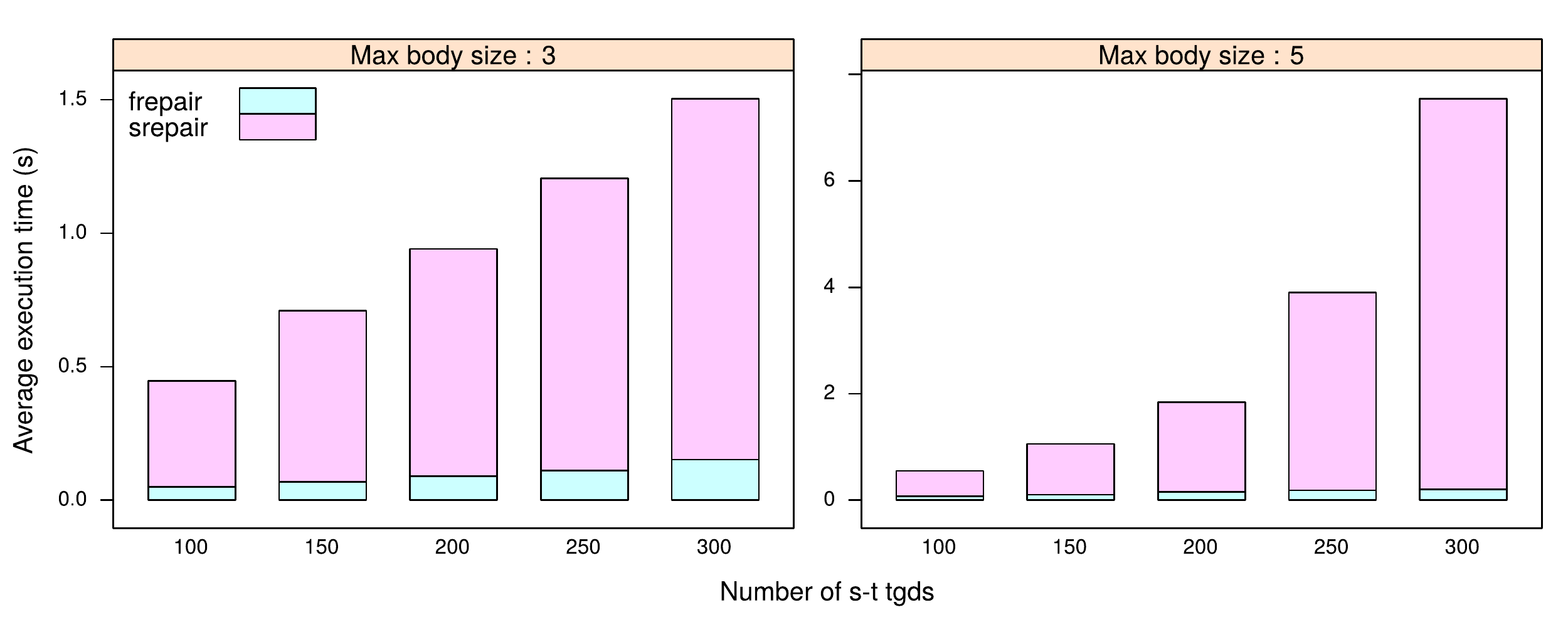}
	\caption{Time breakdown between $\fphase$ and $\sphase$.}
	\label{fig-breakdown}
    \end{subfigure}
    \quad
    \begin{subfigure}[b]{0.48\textwidth}
	\centering
	\includegraphics[width=0.6\textwidth]{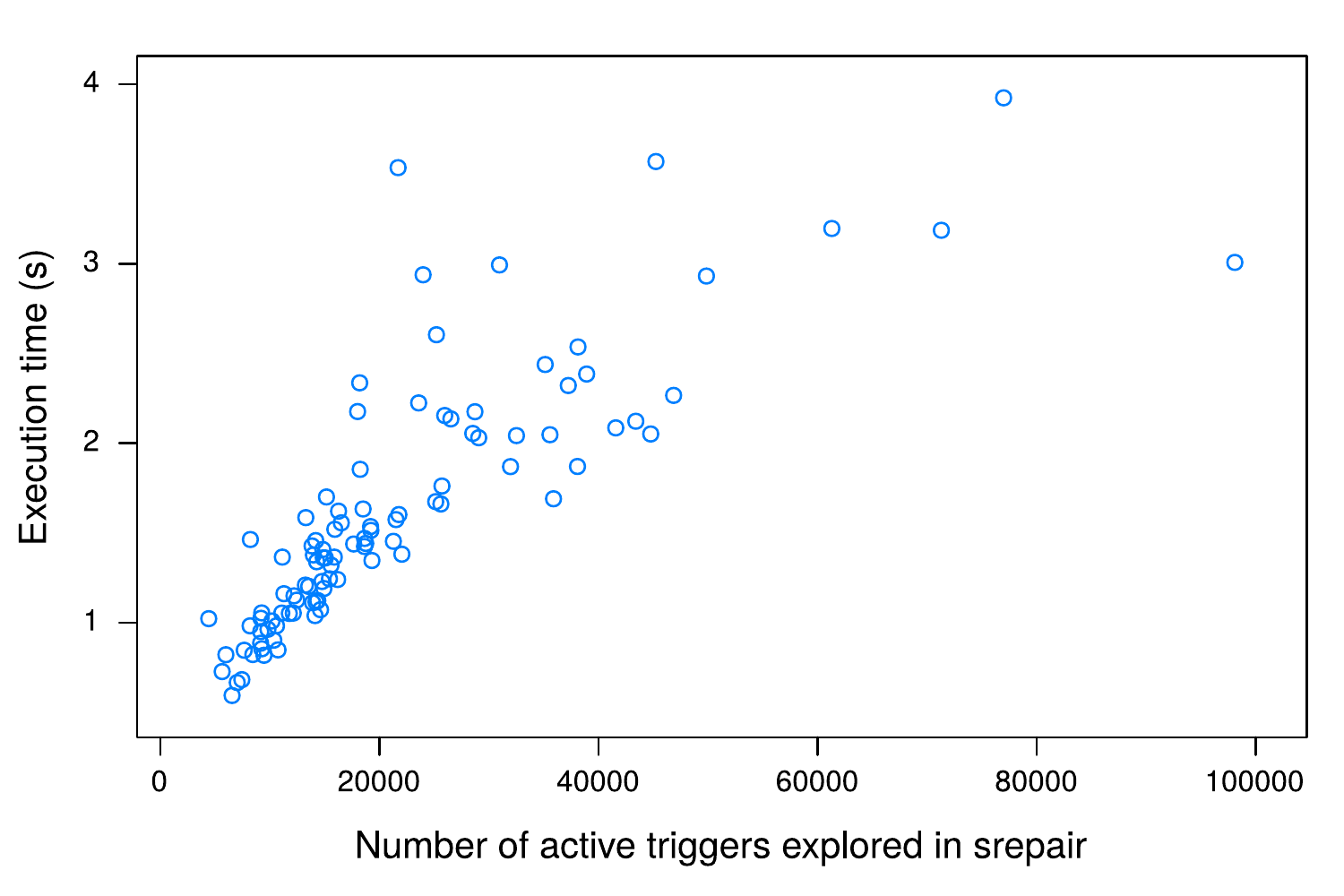}
	\caption{Running time of $\sphase$ over 100 s-t tgds.}
	\label{fig-activeTriggersExplo}
    \end{subfigure}    

    \caption{Summary of the performance-related experimental results.}\label{fig:expe1}
\end{figure*}

Figure~\eqref{fig-comparison} shows the time breakdown for 
$\repair$. The first column shows   
the average running time to run the visible chase over the input s-t mappings, 
the second one shows the average running time for checking the safety of the computed bags 
and the third one shows the average running time for repairing the s-t tgds.
The results show that the repairing time is $32$ times greater 
than tine to compute the visible chase and $40$ times greater 
than the time to check the safety of the chase bags for scenarios with 300 s-t tgds. 
In the simplest scenarios, these numbers are reduced to five and nine, respectively.
Overall, the absolute values of the rewriting times are kept low for these scenarios 
and gracefully scale while increasing the number of s-t tgds and the number of atoms in their bodies.  

\subsection{Time breakdown between $\fphase$ and $\sphase$}
Figure~\eqref{fig-breakdown} shows the average running time for $\fphase$ and $\sphase$
for the considered scenarios. We can see that $\sphase$ is the most time-consuming step of our algorithm.
We can also see that the running time of $\sphase$ increases more in comparison 
to the running time of $\fphase$ when increasing the number of the s-t tgds and the number of  
atoms in their bodies. This is due to overhead that is incurred during the incremental computation of the visual 
chase after repairing a s-t tgd (line~\ref{alg:sphase:incremental} of Algorithm~\ref{alg:sphase}).  
Figure~\eqref{fig-activeTriggersExplo} shows the correlation between 
the number of active triggers detected while incrementally computing the visual chase
and the running time of $\sphase$ for scenarios with 100 s-t tgds using the ANOVA method ({\it p-value} $< 2.2e^{-16}$).
Figure~\eqref{fig-activeTriggersExplo} shows that the most 
complex scenarios lead to the detection of more than $45,000$ active triggers.
Despite the high number of the detected active triggers, 
the running time of $\sphase$ is kept low thus validating its efficiency.

\subsubsection{Evaluating learning accuracy and efficiency}
We adopted the following steps in order to evaluate the performance of our learning approach. 
First, we defined the following two golden standard preference functions that we will try to learn: 
\begin{compactitem}
    \item $P_{max}$, which chooses the repair with the maximum number of exported variables 
    (i.e., the first repair if $\Delta_{FV} < 0$, else the other repair) and in case of ties, it chooses
    the repair with the maximum number of joins (i.e., the first repair if $\Delta_{J} < 0$, else the second repair).
    \item $P_{avg}$, which computes the average value $\Delta = \frac{\Delta_{FV} + \Delta_{J}}{2}$ 
    and chooses the first repair, if $\Delta < 0$; otherwise, it chooses the second repair.
\end{compactitem}
For both preference functions, we created a training set of $10,000$ measurements for the k-NN classifier by
running the repairing algorithm on fresh scenarios of 50 s-t tgds and five body atoms per s-t tgd.
For each input vector ${\langle \delta_{FV},\delta_{J} \rangle}$ whose repair we wanted to predict,
we computed the Euclidean distance between ${\langle \delta_{FV},\delta_{J} \rangle}$ and the vectors of the training set. 
We also set the value of parameter $k$ to $1$. This parameter controls the number of neighbors used to predict the output. 
Higher values of this parameter led to comparable predictions and are omitted for space reasons. 
Finally, we used the trained k-NN classifier as a preference function in $\sphase$, 
rerun the scenarios from Section~\ref{subsec:execution-time}
and compared the returned repairs with the ones returned when applying the 
golden standards $P_{max}$ and $P_{avg}$ as preference functions.

\emph{Learning $P_{max}$.}
Table~\eqref{fig:confucionMatrixPmax} 
shows the confusion matrix associated to learning $P_{max}$.
The confusion matrix outlines the choices undertaken during the iterations of the k-NN algorithm.
In our case, Table~\eqref{fig:confucionMatrixPmax} shows that $\mu_1$ has been selected 230 times, while $\mu_2$
has been chosen 395,680 times. 
We can thus see that $\mu_2$ is chosen in the vast majority of the cases.
Notice that $\mu_2$ is also the default value 
in cases where the preference function weights equally $\mu_1$ and $\mu_2$. 

Apart from the confusion matrix, we also measured the accuracy of learning the preference function, 
by weighing the closeness of the learned mapping to the golden standard mapping. 

We used the Matthews Correlation Coefficient metric (MCC) \cite{baldi2000assessing} to 
compare the repairs returned by the trained k-NN classifier and the ones returned when applied  
$P_{max}$. This is a classical measure that 
allows to evaluate the quality of ML classifiers when ranking is computed 
between two possible values (in our case, the choice between $\mu_1$ and $\mu_2$).
This measure is calculated using the following:
\begin{itemize}
 \item $N_{1,1}$ the number of predictions of $\mu_1$ when $\mu_1$ is expected
 \item $N_{2,2}$ the number of predictions of $\mu_2$ when $\mu_2$ is expected
 \item $N_{1,2}$ the number of predictions of $\mu_1$ when $\mu_2$ is expected
 \item $N_{2,1}$ the number of predictions of $\mu_2$ when $\mu_1$ is expected
\end{itemize}
$$MCC =  \frac{N_{1,1} \times N_{2,2} - N_{1,2} \times N_{2,1}}
	      { \sqrt{ 
	      (N_{1,1} + N_{1,2})
	      (N_{1,1} + N_{2,1})
	      (N_{2,2} + N_{1,2})
	      (N_{2,2} + N_{2,1})
	      } } $$
The results of $MCC$ range from $-1$ for the cases where the model perfectly predicts the inverse of the expected values, 
to $1$ for the cases where the model predicts the expected values. 
The value $MCC = 0$ means that there is no correlation between the predicted value and the expected one.
By applying MCC to the learning of $P_{max}$, we observed that 
the data are clearly discriminated, thus leading to a perfect fit of our prediction in this case ($MCC=1$).

\emph{Learning $P_{avg}$.}
Table~\eqref{fig:confucionMatrixPavg} shows the confusion matrix 
associated to learning $P_{avg}$.
We can see that the predictions are less accurate in this case.
The data is not as clearly discriminated as before, leading to a fairly negligible error rate ($<0.02\%$). 
This error is still acceptable for the learning, since only $<0.02\%$ of the predictions are erroneous.
This is corroborated by a $MCC$ value equal to $0.93$, 
thus leading to a still acceptable fit of our preference function also in the case of $P_{avg}$.

\begin{figure}
    \centering
    \includegraphics[width=1\columnwidth]{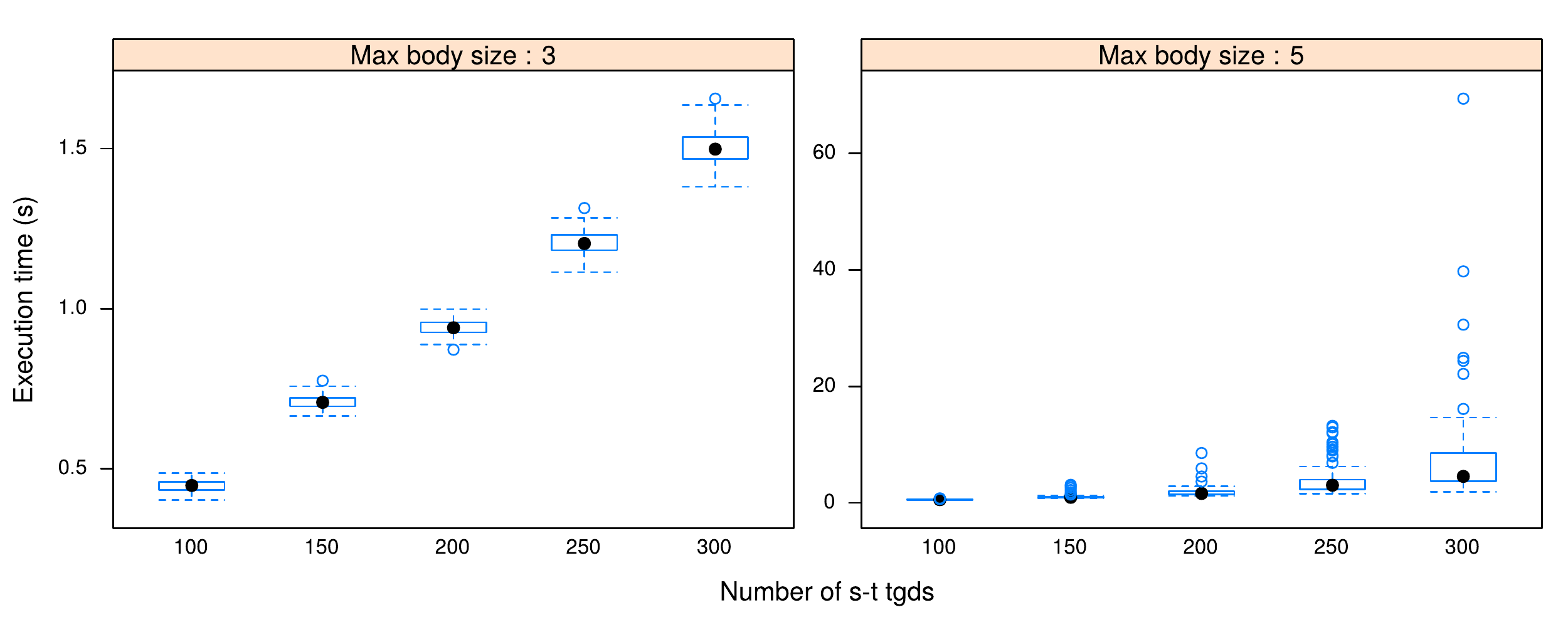}
    \caption{Repairing times with ML.}\label{fig:expeML}
\end{figure}
\subsubsection{Running time of $\repair$ with a learned preference function}\label{sec:learningExecTimes}

In the last experiment, we want to measure the impact of learning on the performance of our algorithm. 
To this end, we compare the running time of 
$\repair$ when adopting a hard-coded preference function (as in the results reported in Figure~\ref{fig:expe1}) 
and when adopting a learned preference function. Figure \ref{fig:expeML} shows the running times 
for the same scenarios used in Figure~\ref{fig:expe1}. We can easily observe that the runtimes are rather similar with and without learning and the difference amounts to a few milliseconds. This further corroborates the utility of learning the preference function and shows that the learning is robust and does not deteriorate the performances of our algorithm. 

\begin{table}
    \centering
    \begin{subfigure}[b]{0.48\columnwidth}
    \centering
      \adjustbox{height=0.9cm ,keepaspectratio}{
	  \begin{tabular}{c|cc}
	  \toprule
	   & \multicolumn{2}{c}{golden standard}\\
	  prediction & $\mu_1$ & $\mu_2$\\
	  \midrule
	    $\mu_1$ & 230 & 0 \\
	    $\mu_2$ & 0 & 395680 \\
	  \bottomrule
	  \end{tabular}
     }
     \caption{$P_{max}$ confusion matrix.}
     \label{fig:confucionMatrixPmax}
    \end{subfigure}
    ~
    \begin{subfigure}[b]{0.48\columnwidth}
    \centering
    \adjustbox{height=0.9cm ,keepaspectratio}{
	  \begin{tabular}{c|cc}
	  \toprule
		     & \multicolumn{2}{c}{golden standard}\\
	  prediction & $\mu_1$ & $\mu_2$\\
	  \midrule
	    $\mu_1$ & 290 & 1 \\
	    $\mu_2$ & 42 & 395577 \\
	  \bottomrule
	  \end{tabular}
    }
    \caption{$P_{avg}$ confusion matrix.}
    \label{fig:confucionMatrixPavg}
    \end{subfigure}

    \caption{Confusion matrix for the golden standards.
    }\label{fig:confucionMatrix}
\end{table}

\section{Conclusion} \label{sec:conclusion}

We have addressed the problem of safety checking w.r.t. a set of policy
views in a data exchange scenario. 
We have also proposed efficient repairing
algorithms that sanitize the mappings w.r.t. the policy views,
in cases where the former leak sensitive information. 
Our approach is inherently data-independent and leads to obtaining 
rewritings of the mappings guaranteeing privacy preservation at a
schema level. As such, our approach is orthogonal to several data-dependent
privacy-preservation methods, that can be used on the companion source and
target instances to further corroborate the privacy guarantees. 
We envision several extensions of our work, such as the study of general
GLAV mappings and the interplay between data-independent and
data-dependent privacy methods.

\bibliographystyle{ACM-Reference-Format}
\bibliography{references}

%
%

\end{document}